\newcommand{\ud}{\,\textrm{d}}
\newcommand{\RR}{\mathbb{R}}
\newcommand{\CCC}{\mathbb{C}}
\newcommand{\bxi}{\boldsymbol{\xi}}
\newcommand{\bx}{\boldsymbol{x}}
\newcommand{\by}{\boldsymbol{y}}
\newcommand{\balpha}{\boldsymbol{\alpha}}
\newcommand{\bz}{\boldsymbol{z}}
\newcommand{\bu}{\boldsymbol{u}}
\newcommand{\bzero}{\boldsymbol{0}}
\newcommand{\bnabla}{\boldsymbol{\nabla}}
\newcommand{\bmm}{\boldsymbol{m}}
\newcommand{\ii}{\textrm{i}}
\newcommand{\spn}{\text{Span}}
\newcommand{\co}{\text{co}}
\newcommand{\bos}{\boldsymbol{s}}
\newcommand{\bphi}{\boldsymbol{\phi}}
\newcommand{\bor}{\boldsymbol{r}}
\newcommand{\bY}{\boldsymbol{Y}}
\newcommand{\bM}{\boldsymbol{M}}
\newcommand{\bI}{\boldsymbol{I}}
\newcommand{\bH}{\boldsymbol{H}}
\newcommand{\rme}{\textrm{e}}
\newcommand{\rmi}{\text{i}}
\newtheorem{theorem}{Theorem}
\newtheorem{lemma}[theorem]{Lemma}
\newtheorem{corollary}[theorem]{Corollary}
\theoremstyle{definition} \newtheorem*{remark}{Remark}
\begin{document}


\title{Two-step asymptotics of scaled Dunkl processes}

\date{\today}

\author{Sergio Andraus}
\email{andraus@spin.phys.s.u-tokyo.ac.jp}

\author{Seiji Miyashita}
\affiliation{Graduate School of Science, The University of Tokyo}


\begin{abstract}
Dunkl processes are generalizations of Brownian motion obtained by using the differential-difference operators known as Dunkl operators as a replacement of spatial partial derivatives in the heat equation.
Special cases of these processes include Dyson's Brownian motion model and the Wishart-Laguerre eigenvalue processes, which are well-known in random matrix theory.
It is known that the dynamics of Dunkl processes is obtained by transforming the heat kernel using Dunkl's intertwining operator.
It is also known that, under an appropriate scaling, their distribution function converges to a steady-state distribution which depends only on the coupling parameter $\beta$ as the process time $t$ tends to infinity. 
We study scaled Dunkl processes starting from an arbitrary initial distribution, and we derive expressions for the intertwining operator in order to calculate the asymptotics of the distribution function in two limiting situations. In the first one, $\beta$ is fixed and $t$ tends to infinity (approach to the steady state), and in the second one, $t$ is fixed and $\beta$ tends to infinity (strong-coupling limit).
We obtain the deviations from the limiting distributions in both of the above situations, and we find that they are caused by the two different mechanisms which drive the process, namely, the drift and exchange mechanisms. 
We find that the deviation due to the drift mechanism decays as $t^{-1}$, while the deviation due to the exchange mechanism decays as $t^{-1/2}$.
\end{abstract}
\maketitle

 

\section{Introduction}\label{intro}


The simple diffusion process is one of the most fundamental processes in physics, and it is modeled by Brownian motion.\cite{karatzasshreve91} The transition probability density (TPD) of Brownian motion, known as the heat kernel, obeys the heat equation.
Dunkl processes \cite{roslervoit98} are generalizations of multi-dimensional Brownian motion achieved through the use of Dunkl operators.\cite{dunkl89, dunklxu}
Dunkl operators consist of a differential operation with respect to a coordinate and of a sum of difference operations with respect to reflections defined by a finite set of vectors known as ``root system,'' as will be explained in the next section (Sec.~\ref{definitions}). This root system introduces the so-called Weyl chambers, which are disjoint portions of Euclidean space which are related to each other by the above reflections.
Dunkl processes are defined by the time evolution of their TPDs, which is given by a heat equation in which the Laplacian operator is replaced by the sum of the squares of Dunkl operators (the Dunkl heat equation). 
Because Dunkl operators contain differential and difference terms, the Dunkl heat equation contains a diffusion term, a drift term which drives the process away from the walls of the Weyl chambers, and a difference term among the Weyl chambers. The diffusion and drift terms drive the process within each of the Weyl chambers separately, while the difference term makes the process jump from one Weyl chamber to another, causing the process to relax toward a symmetry called ``$W$-invariance.'' We call the former ``drift'' mechanism, and the latter ``exchange'' mechanism. See Sec.~\ref{definitions} for details.

The relationship between the usual Brownian motion and Dunkl processes is formalized by the intertwining operator $V$, introduced by Dunkl in Ref.~\onlinecite{dunkl91}. The intertwining operator is a functional which is uniquely defined by the way it relates differential operators and Dunkl operators. In fact, $V$ transforms the heat equation into the Dunkl heat equation. Therefore, the solution of the Dunkl process, its TPD, is given by the action of $V$ on the solution of Brownian motion. We may even say that the dynamics of Dunkl processes are encoded in $V$. However, the explicit form of $V$ is unknown in general,\cite{rosler08, maslouhiyoussfi09} and although significant progress has been achieved recently,\cite{deleavaldemniyoussfi15} the study of Dunkl processes requires explicit derivations of the action of the intertwining operator for particular cases.

One of the most important properties of Dunkl processes is that, depending on the type of Dunkl operators under consideration, their continuous or ``radial'' component,\cite{gallardoyor05} which is the continuous motion of the process within the Weyl chambers, can be specialized to several well-known families of stochastic processes. In general, the norm, i.e., the distance from the origin of a Dunkl process, is given by a Bessel process.\cite{gallardoyor08} In addition, Dunkl operators of type $A_{N-1}$ produce a family of radial Dunkl processes which is mathematically equivalent to Dyson's Brownian motion model \cite{dyson62,demni08A} (henceforth referred to as Dyson's model). Dyson's model has been studied in relation with Fisher's vicious walker model,\cite{fisher84, katoritanemura02, katoritanemura07} polymer networks,\cite{degennes68,essamguttmann95} level statistics of atomic nuclei,\cite{bohigashaqpandey85} the Kardar-Parisi-Zhang universality class,\cite{prahoferspohn00,imamurasasamoto05,takeuchisano10,schehr12} traffic statistics,\cite{baikborodindeiftsuidan06} combinatorics and representation theory \cite{guttmannowczarekviennot98, krattenthalerguttmannviennot00, fulton} among many others.
Similarly, Dunkl operators of type $B_N$ give a family of radial Dunkl processes which corresponds to the eigenvalues of the Wishart and Laguerre processes.\cite{bru91,konigoconnell01, katoritanemura11} These multivariate stochastic processes are related to the QCD Dirac operator.\cite{verbaarschotzahed93} They have been studied as the eigenvalue processes of matrix-valued Brownian motions with chirality,\cite{katoritanemura04} and they are one example of the application of a multidimensional generalization of the Yamada-Watanabe theorem.\cite{graczykmalecki13} Dunkl operators themselves have also been used outside of stochastic processes, e.g., in the study of the Calogero-Moser-Sutherland systems,\cite{bakerforrester97, bakerdunklforrester, khastgir00} in a generalization of the quantum harmonic oscillator in multiple dimensions \cite{genestvinetzhedanov14} and also in supersymmetric quantum mechanics with reflections.\cite{postvinetzhedanov11}

It is noted that Dyson's model and the Wishart-Laguerre processes are matrix-valued processes indexed by the parameter $\beta$, which depends on the type of symmetry imposed on the entries of their corresponding matrices.\cite{mehta04, forrester10} When these matrices are real symmetric, complex Hermitian or quaternion self-dual, the parameter $\beta$ takes the values $1,2$ or $4$, respectively. In addition, it is known that the eigenvalues of these processes behave as particles in one dimensional space which repel mutually through a logarithmic potential, and $\beta$ is regarded as a coupling constant of interaction between the particles. Although the radial Dunkl processes of type $A_{N-1}$ and $B_N$ are well-defined for all $\beta>0$ and they share the stochastic differential equation of Dyson's model and the Wishart-Laguerre processes,\cite{demni08A} they do not have a known matrix-valued representation in the cases where $\beta$ is not equal to $1,2$ or 4. In our previous work,\cite{andrauskatorimiyashita12,andrauskatorimiyashita14} we examined Dyson's model and the Wishart-Laguerre processes through their formulation as radial Dunkl processes. 

In this paper, we study the distribution of an arbitrary Dunkl process whose space variables $\by$ have been scaled as $\by=\sqrt{\beta t}\bY$, where $t$ is the time-duration of the process. We calculate the asymptotics of the scaled distribution in two scenarios. 
Our first result (Theorem~\ref{steadystatetheorem}) states that, when $\beta > 0$ is fixed and $t$ tends to infinity, the distribution of of the process approaches a steady-state distribution with a first-order correction which decays with time as $t^{-1/2}$. This correction is a direct consequence of Lemma~\ref{vbetaonlinearfns}, which gives the action of $V$ on linear functions. Because the steady-state distribution is $W$-invariant, and the correction depends directly on the asymmetry (i.e., non-$W$-invariance) of the initial distribution, our result implies that this part of the relaxation process is due to the exchange mechanism.

Our second result (Theorem~\ref{TheoremFreezingLimit}) concerns the strong-coupling asymptotics of the scaled process, where $t>0$ is fixed and $\beta$ tends to infinity. In this case, the process distribution can be approximated by a sum of Gaussians centered at a set of points known as the ``peak set'' \cite{dunkl89B} of the particular type of Dunkl process considered. Finite-$\beta$ corrections to the center and the width of the approximating Gaussians are found to decay as $(\beta t)^{-1}$. In addition, the coefficients of the Gaussians are found to be different, but they converge to equal values as $(\beta t)^{-1/2}$. This result is obtained from Lemma~\ref{FreezingLimitDunklKernelFullRank}, which gives the action of $V$ on the exponential function when $\beta$ tends to infinity. From our results, we distinguish the two relaxation mechanisms in concrete terms. The relaxation due to the drift mechanism is found to be responsible for the width and position of each of the approximating Gaussians, while the exchange mechanism is found to be responsible for the relaxation of the height of the Gaussians.

This paper is organized as follows: in Sec.~\ref{definitions} we review the definitions of Dunkl operators, Dunkl processes and all related quantities. In Sec.~\ref{mainresults}, we give our results for the approach to the steady state (Theorem~\ref{steadystatetheorem} and Lemma~\ref{vbetaonlinearfns}) and the strong-coupling asymptotics (Theorem~\ref{TheoremFreezingLimit} and Lemma~\ref{FreezingLimitDunklKernelFullRank}). We illustrate these results for the case of the one-dimensional Dunkl process, for which the TPD is known explicitly. In Sec.~\ref{proofs}, we give the proof of our results. Finally, we discuss these results and propose a few related open problems in Sec.~\ref{conclusions}.

\section{Dunkl operators, Dunkl processes and the Intertwining operator}\label{definitions}
We briefly review the definition of Dunkl operators and other necessary mathematical objects. For more details, see Refs.~\onlinecite{dunklxu,rosler08}. 

Let us denote the reflection of the vector $\bx\in\RR^N$ along the vector $\balpha\in\RR^N$ by
\begin{equation}
\sigma_{\balpha}\bx:=\bx-2\frac{\balpha\cdot\bx}{\balpha\cdot\balpha}\balpha.
\end{equation}
A root system is a finite set of vectors, called roots, which is defined by the property that it remains unchanged if all of its elements are reflected along any particular root. In mathematical terms, a set of vectors $R$ is a root system if the set $\sigma_{\balpha} R:=\{\sigma_{\balpha}\bxi : \bxi\in R\}$ has the property that
\begin{equation}
\sigma_{\balpha} R=R,\ \forall\balpha\in R.
\end{equation}
In this paper, we impose the condition that the equation $a\bxi=\balpha$, for $\balpha,\bxi\in R$, implies that $a=\pm1$. Root systems that satisfy this condition are called ``reduced''. We define the positive subsystem $R_+=\{\balpha\in R : \balpha\cdot\bmm>0\}$ by choosing an arbitrary vector $\bmm$ such that $\bmm\cdot\balpha\neq 0$ for any root $\balpha$. Although the positive subsystem is chosen arbitrarily, the definitions that follow do not depend on the choice of $\bmm$.

For every root system, there is a group which is formed by all the reflections $\{\sigma_{\balpha}\}_{\balpha\in R}$ and their compositions. We denote this group by $W$. A Weyl chamber is defined as a connected subset of $\RR^N$ whose elements $\bx$ satisfy $\balpha\cdot\bx\neq 0$ for every root $\balpha$. Let us denote the number of elements in $W$ by $|W|$. Because each Weyl chamber is related to the others through the action of the elements of $W$, it follows that there are $|W|$ Weyl chambers. A parameter called ``multiplicity'' is assigned to each disjoint orbit of the roots $\balpha$ under the action of the elements of $W$, and the set of multiplicities is summarized as a function $k:R\to\CCC$ with the property that 
\begin{equation}\label{multiplicityfunctioncondition}
k(\sigma_{\balpha}\bxi)=k(\bxi)
\end{equation}
for $\balpha,\bxi\in R$. The multiplicities are parameters that are chosen arbitrarily, and in the present paper we assume that they are all real and positive, $k(\balpha)>0,\ \forall \balpha\in R$.

Let us denote by $\alpha_i$ the $i$th component of $\balpha$, and let us consider a differentiable function $f(\bx)$. Then, Dunkl operators are defined by
\begin{equation}
T_i f(\bx)=\frac{\partial}{\partial x_i}f(\bx)+\sum_{\balpha\in R_+}\alpha_ik(\balpha)\frac{[1-\sigma_{\balpha}]f(\bx)}{\balpha\cdot\bx},\ i=1,\ldots,N,
\end{equation}
where $\sigma_{\balpha}f(\bx)=f(\sigma_{\balpha}\bx)$, and for $\rho\in W,$ $\rho f(\bx)=f(\rho^{-1}\bx)$. If $f(\bx)$ has continuous second derivatives, then $T_iT_jf(\bx)=T_jT_if(\bx)$. In addition, the ``Dunkl Laplacian''\cite{dunklxu} is given by
\begin{equation}\label{dunkllaplacian}
\sum_{i=1}^NT_i^2 f(\bx)=\Delta f(\bx)+2\sum_{\balpha\in R_+}k(\balpha)\Big[\frac{\balpha\cdot\bnabla}{\balpha\cdot\bx}f(\bx)-\frac{\alpha^2}{2}\frac{1-\sigma_{\balpha}}{(\balpha\cdot\bx)^2}f(\bx)\Big],
\end{equation}
where $\Delta=\sum_{i=1}^N(\partial/\partial x_i)^2$ denotes the Laplacian operator, $\bnabla=(\partial/\partial x_1,\ldots,\partial/\partial x_N)^T$ denotes the gradient operator and $x=\sqrt{|\bx|^2}=\sqrt{\bx\cdot\bx}$ whenever no confusion arises.

Consider a stochastic process given by the TPD $p(t,\by|\bx)$, which represents the probability density that a process that starts at $\bx=(x_1,\ldots,x_N)^T$ reaches the position $\by=(y_1,\ldots,y_N)^T$ after a time $t$. This stochastic process is a Dunkl process if $p(t,\by|\bx)$ satisfies
\begin{equation}\label{dunklgeneralbackwardfpe}
\frac{\partial}{\partial t}p(t,\by|\bx)=\frac{1}{2}\sum_{i=1}^N{}_xT_i^2p(t,\by|\bx).
\end{equation}
Note that the first-order derivative and difference terms in \eqref{dunkllaplacian} give the explicit form of the drift and exchange mechanisms, respectively. This means that, in general, Dunkl processes are discontinuous diffusion processes with drift. Note also that if $p(t,\by|\bx)$ is symmetrized with respect to the action of the elements of $W$,
\begin{equation}
\hat{p}(t,\by|\bx)=\sum_{\rho\in W}p(t,\by|\rho\bx),
\end{equation}
the exchange (difference) term in \eqref{dunkllaplacian} vanishes, yielding a continuous process. Henceforth, we will say that functions which are symmetric with respect to the action of the elements of $W$ are ``$W$-invariant.'' These continuous-path processes are called ``radial Dunkl processes,'' \cite{gallardoyor05} and several particular cases have been studied as the eigenvalue processes of matrix-valued models.\cite{demni08A} Radial Dunkl processes on the root system $A_{N-1}$ correspond to Dyson's model \cite{dyson62} when the multiplicity is $k=\beta/2$, and radial Dunkl processes on the root system of type $B_N$, correspond to the square roots of the eigenvalues of the Wishart-Laguerre processes \cite{bru91,konigoconnell01} when the multiplicities are chosen as $k_1=\beta/2$ and $k_2=\beta(2\nu+1)/4$ where $\nu$ is the Bessel index (see, e.g., Ref.~\onlinecite{karatzasshreve91}). For consistency with these processes, we use a renormalized set of multiplicities, chosen as follows. We set $k(\balpha)=\beta\kappa(\balpha)/2$, where $\kappa(\balpha)$ satisfies \eqref{multiplicityfunctioncondition}, while fixing one of the multiplicities so that for at least one root, say $\bxi$, $\kappa(\bxi)=1$. Then, \eqref{dunklgeneralbackwardfpe} becomes
\begin{equation}\label{dunklexplicitbackwardfpe}
\frac{\partial}{\partial t}p(t,\by|\bx)=\frac{1}{2}\Delta_x p(t,\by|\bx)+\frac{\beta}{2}\sum_{\balpha\in R_+}\kappa(\balpha)\Big[\frac{\balpha\cdot\bnabla_x}{\balpha\cdot\bx}p(t,\by|\bx)-\frac{\alpha^2}{2}\frac{1-\sigma_{\balpha}}{(\balpha\cdot\bx)^2}p(t,\by|\bx)\Big].
\end{equation}
With the renormalized multiplicities, we reproduce the factor of $\beta/2$ that appears in Dyson's model and in Wishart-Laguerre processes, and we extend its appearance to Dunkl processes on all other root systems. Then, the parameter $\beta$ is a coefficient of the drift term (first term in the brackets) and the exchange term (second term in the brackets). Thus, it represents the strength of both terms relative to the Laplacian.

The intertwining operator\cite{dunkl91}, denoted henceforth by $V_\beta$, is defined by the following properties: $V_\beta$ is linear, it is normalized so that $V_\beta[1]=1$, it preserves the degree of homogeneous polynomials, and for every analytical function $f(\bx)$ it satisfies the relation
\begin{equation}\label{intertwiningrelation}
T_i V_\beta [f(\bx)] = V_\beta \Big[\frac{\partial}{\partial x_i} f(\bx)\Big].
\end{equation}
Note that one can transform the diffusion equation $\partial/\partial t=\Delta/2$ into \eqref{dunklgeneralbackwardfpe} by applying $V_\beta$ from the left. This means that, if we denote the TPD of a simple diffusion by $p_{\text{BM}}(t,\by|\bx)$, then the function $V_\beta p_{\text{BM}}(t,\by|\bx)$ is a solution of \eqref{dunklexplicitbackwardfpe}.
Using $V_\beta$, one can give a formal expression for the joint eigenfunction of the Dunkl operators $\{T_i\}_{i=1}^N$, known as the ``Dunkl kernel'' $E_\beta(\bx,\by)$. This function satisfies the condition $E_\beta(\bzero,\by)=1$, where $\boldsymbol{0}=(0,\ldots,0)^T$, and the equation
\begin{equation}\label{dunklkerneleigenvalueequation}
T_i E_\beta(\bx,\by)=y_iE_\beta(\bx,\by),\ i=1,\ldots,N.
\end{equation}
Using $V_\beta$ and \eqref{intertwiningrelation}, the Dunkl kernel can be written as
\begin{equation}
E_\beta(\bx,\by)=V_\beta \rme^{\bx\cdot\by}.
\end{equation}
The TPD of a Dunkl process is given by \cite{rosler98}
\begin{equation}\label{TransitionDensityExplicit}
p(t,\by|\bx)=w_\beta\left(\frac{\by}{\sqrt{t}}\right)\frac{\rme^{-(y^2+x^2)/2t}}{c_\beta t^{N/2}}V_\beta \rme^{\bx\cdot\by/t},
\end{equation}
where
\begin{equation}
w_\beta(\bx):=\prod_{\balpha\in R_+}|\balpha\cdot\bx|^{\beta \kappa(\balpha)},
\end{equation}
and
\begin{equation}
c_\beta:=\int_{\RR}\rme^{-x^2/2}w_\beta(\bx)\ud \bx,
\end{equation}
which in several cases is a Selberg integral.\cite{mehta04} 
Because the general form of the intertwining operator is unknown, this expression is formally correct but unknown in most cases. Consequently, the difficulty in calculating quantities derived from $p(t,\by|\bx)$ lies in finding useful explicit expressions for the Dunkl kernel.

The present processes are known to have a stationary state if we scale the variable $\by$ as $\bY=\sqrt{\beta t}\by$ (see, e.g., Ref.~\onlinecite{katoritanemura04}). With this scaling, the process probability distribution is given by
\begin{equation}\label{scaleddensityf}
f(t,\bY):=\int_{\RR^N}(\beta t)^{N/2}p(t,\sqrt{\beta t}\bY|\bx)\mu(\bx)\ud\bx,
\end{equation}
where $\mu(\bx)$ is an arbitrary initial distribution. The expectation of a test function $\phi(\bY)$ is given by
\begin{equation}\label{timedependentexpectationmu}
\langle\phi\rangle_{t}:=\int_{\RR^N}\phi(\bY)f(t,\bY)\ud\bY.
\end{equation}
The steady-state distribution of the process is given by
\begin{equation}\label{steadystatedistribution}
\frac{1}{z_\beta}\exp[-\beta F_R(\bY)],
\end{equation}
where
\begin{equation}\label{definitionFR}
F_R(\bY):=\frac{Y^2}{2}-\frac{1}{\beta}\log w_\beta(\bY)=\frac{Y^2}{2}-\sum_{\balpha\in R_+}\kappa(\balpha)\log |\balpha\cdot\bY|,
\end{equation}
and
\begin{equation}
z_\beta:=\int_{\RR^N}\rme^{-\beta F_R(\bY)}\ud\bY=\frac{c_\beta}{\beta^{(N+\beta\gamma)/2}}.
\end{equation}
Here, we have introduced the sum of renormalized multiplicities
\begin{equation}
\gamma:=\sum_{\balpha\in R_+}\kappa(\balpha).
\end{equation}
Because of the form of the steady-state distribution, the parameter $\beta$ is also understood as the inverse temperature. Rewriting \eqref{scaleddensityf} using \eqref{steadystatedistribution} gives
\begin{equation}\label{scaleddensityfconverging}
f(t,\bY)=\frac{\rme^{-\beta F_R(\bY)}}{z_\beta}\int_{\RR^N}\rme^{-x^2/2t}V_\beta\rme^{\sqrt{\beta/t}\bx\cdot\bY}\mu(\bx)\ud\bx\stackrel{t\to\infty}{\longrightarrow}\frac{\rme^{-\beta F_R(\bY)}}{z_\beta}.
\end{equation}

The function $F_R(\bY)$ is clearly $W$-invariant, and we will show in the Appendix that it is convex for $\bY\in\RR^N$ such that $\bY\cdot\balpha\neq 0$ for all $\balpha\in R$. We will also show that it has $|W|$ minima which can be expressed as $\rho \bos$, $\rho\in W$ and $\bos$ is any particular minimum of $F_R(\bY)$. The minima of $F_R(\bY)$ are known as the peak set \cite{dunkl89B} of $R$ and they are all located at a distance $\sqrt{\gamma}$ from the origin. In view of \eqref{scaleddensityfconverging}, we define the steady-state expectation of $\phi(\bY)$ as
\begin{equation}
\langle\phi\rangle:=\int_{\RR^N}\phi(\bY)\frac{\rme^{-\beta F_R(\bY)}}{z_\beta}\ud\bY.
\end{equation}

Denote the space spanned by the root system $R$ by $\spn(R)$, and let us denote the rank of the root system by $d_R$. The form of \eqref{dunklexplicitbackwardfpe} reveals that if $d_R<N$, then the effect of the drift terms due to the roots $\balpha$ is limited to $\spn(R)$, and the process will behave like a free Brownian motion in the part of $\RR^N$ which is orthogonal to $\spn(R)$. Taking this fact under consideration, we will assume that the initial distribution $\mu(\bx)$ is defined so that 
\begin{equation}\label{conditioninitialdistribution}
\mu(\bx)=0\text{ whenever }\bx\notin\spn(R).
\end{equation}

\section{Asymptotic properties}\label{mainresults}
In this section, we give our two main results and illustrate them using the one-dimensional Dunkl process.

\subsection{Approach to the steady-state ($t\to\infty$)}\label{subsectionsteadystate}

Here, we consider the asymptotic behavior in which $\beta>0$ is fixed and $t$ tends to infinity. We focus on the time-dependent expectation $\langle\phi\rangle_t$ and how it converges to the steady-state expectation $\langle\phi\rangle$. 
We introduce a quantity $\delta$ which denotes the portion of the steady-state distribution that we take into consideration, i.e., the amount of the tail of the distribution which we will ignore. We call it the ``tolerance'' parameter.
For any value of $\delta$, there exists a parameter $r=r(\delta)>0$ such that the relationship
\begin{equation}\label{tuningr}
1-\delta=\int_{Y<r\sqrt{\gamma}}\frac{\rme^{-\beta F_R(\bY)}}{z_\beta}\ud\bY
\end{equation}
is satisfied. Note that the peaks of the distribution $\exp[-\beta F_R(\bY)]/z_\beta$ lie at a distance $\sqrt{\gamma}$ from the origin (see Appendix), meaning that $r(\delta)$ must be larger than 1 to effectively cover the largest contribution of $\exp[-\beta F_R(\bY)]/z_\beta$ to the integral. First, we will consider the case in which the initial distribution is given by a delta function.

\begin{theorem}\label{steadystatetheorem}
Consider the initial distribution $\mu(\bx)=f(t=0,\bx)=\delta^{(N)}(\bx-\bx_0)$ with $\bx_0\in\spn(R)$. The time-dependent expectation of a test function $\phi(\bY)$ at time $t$, $\langle\phi\rangle_{t,\bx_0}$, converges to its steady-state expectation $\langle\phi\rangle$ as
\begin{equation}
\langle\phi\rangle_{t,\bx_0}=\langle\phi\rangle\Big\{1+O\Big[\sqrt{\frac{\beta\gamma}{t}}\frac{ r(\delta) x_0}{(1+\beta \gamma /d_R)}\Big]\Big\}
\end{equation}
for $t\gg x_0^2\max(1/\beta \gamma r(\delta)^2,\beta \gamma r(\delta)^2)$.
\end{theorem}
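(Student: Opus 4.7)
The plan is to substitute the delta-function initial distribution $\mu(\bx)=\delta^{(N)}(\bx-\bx_0)$ into equation~\eqref{scaleddensityfconverging}, yielding
\begin{equation*}
\langle\phi\rangle_{t,\bx_0}=\rme^{-x_0^2/2t}\int_{\RR^N}\phi(\bY)\frac{\rme^{-\beta F_R(\bY)}}{z_\beta}V_\beta \rme^{\sqrt{\beta/t}\bx_0\cdot\bY}\,\ud\bY,
\end{equation*}
and then to Taylor-expand the Dunkl kernel $V_\beta \rme^{\sqrt{\beta/t}\bx_0\cdot\bY}$ in powers of $1/\sqrt{t}$. Because $V_\beta$ preserves the degree of homogeneous polynomials, the $n$-th term of this expansion is $(\beta/t)^{n/2}V_\beta[(\bx_0\cdot\bY)^n]/n!$. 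The zeroth-order contribution simply reproduces $\langle\phi\rangle$, so the task reduces to controlling the first-order and higher-order contributions.

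The first-order correction is $\sqrt{\beta/t}\int \phi(\bY) V_\beta[\bx_0\cdot\bY]\,\rme^{-\beta F_R(\bY)}/z_\beta\,\ud\bY$. Here Lemma~\ref{vbetaonlinearfns} enters crucially: it evaluates $V_\beta$ on the linear function $\bx_0\cdot\bY$, producing an explicit linear expression in $\bY$ whose normalization carries the factor $1/(1+\beta\gamma/d_R)$. Splitting the $\bY$-integration at the tolerance radius $Y=r(\delta)\sqrt{\gamma}$, the elementary bound $|\bx_0\cdot\bY|\leq x_0\,r(\delta)\sqrt{\gamma}$ on the inner ball yields a relative contribution of size $O[\sqrt{\beta\gamma/t}\,r(\delta) x_0/(1+\beta\gamma/d_R)]$ with respect to $\langle\phi\rangle$, while the tail contributes at most a fraction $\delta$ of the steady-state weight, controlled via \eqref{tuningr}.

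Higher-order terms must then be shown subdominant. Using Rösler's integral representation, $V_\beta$ acts as an average against a probability measure supported on the convex hull of the $W$-orbit of $\bx_0$, so that $|V_\beta[(\bx_0\cdot\bY)^n]|\leq(x_0 Y)^n$; the $n\geq 2$ portion of the Dunkl kernel is therefore majorized by the Taylor remainder of $\rme^{\sqrt{\beta/t}\,x_0 Y}$. On the inner ball this remainder is $O((\sqrt{\beta/t}\,r(\delta)\sqrt{\gamma}\,x_0)^2)$, which is strictly smaller than the first-order term precisely when $t\gg x_0^2\beta\gamma r(\delta)^2$; outside the ball, balancing the exponential growth of the kernel against the Gaussian-type decay of $\rme^{-\beta F_R(\bY)}$ produces the companion requirement $t\gg x_0^2/(\beta\gamma r(\delta)^2)$, together yielding the stated $\max(\cdot,\cdot)$ hypothesis. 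The prefactor $\rme^{-x_0^2/2t}=1+O(1/t)$ is subdominant to the $1/\sqrt{t}$-order correction and only affects the higher-order error.

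The main obstacle is the tail region $Y\geq r(\delta)\sqrt{\gamma}$: there the Dunkl kernel is not small, and the mass bound \eqref{tuningr} alone is insufficient, so one must combine it with Rösler's kernel inequality and the confining behaviour of $F_R$ established in the Appendix to confirm that the tail contributes only at the claimed order. Matching the two regimes so that both hypotheses on $t$ simultaneously yield small inner-ball remainder and controlled tail mass is the delicate bookkeeping step of the argument.
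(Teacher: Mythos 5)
Your proposal is correct and follows essentially the same route as the paper's own proof: the same split of the $\bY$-integral at $Y=r(\delta)\sqrt{\gamma}$, the same use of Lemma~\ref{vbetaonlinearfns} to extract the first-order correction with the $1/(1+\beta\gamma/d_R)$ factor, and the same reliance on R\"osler's bound \eqref{roslersbound} to control both the higher-order Taylor remainder and the tail region, yielding the two conditions whose combination gives the stated $\max(\cdot,\cdot)$ hypothesis. The only difference is cosmetic: you bound the $n\ge 2$ terms of the kernel expansion termwise, whereas the paper absorbs them into a single approximation step under the assumption $x_0^2/2t\ll\sqrt{\beta/t}\,\bx_0\cdot\bY\ll 1$.
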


This theorem is a consequence of the following lemma. The variable $\bx$ can be separated into the component which belongs to $\spn(R)$, $\bx_\parallel$, and the component which is orthogonal to $R$, $\bx_\perp$. If the rank of $R$ is $N$, $\bx=\bx_\parallel$ and $\bx_\perp=\bzero$. 

\begin{lemma}\label{vbetaonlinearfns}
The action of $V_\beta$ on the linear function $f(\bx)=\bx\cdot\by$ is given by
\begin{equation}
V_\beta \bx\cdot\by = \frac{\bx_\parallel\cdot\by_\parallel}{1+\beta\gamma/d_R}+\bx_\perp\cdot\by_\perp.
\end{equation}
\end{lemma}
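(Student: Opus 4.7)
The plan is to exploit the three defining properties of $V_\beta$---linearity, preservation of the degree of homogeneous polynomials, and the intertwining relation~\eqref{intertwiningrelation}. Because $V_\beta$ acts on functions of $\bx$ and preserves degrees, $V_\beta[\bx\cdot\by]$ must be a linear function of $\bx$; and by the linearity of $V_\beta$ in its argument, it is also linear in the parameters $\by$. Hence there is a linear map $\by\mapsto\bu(\by)$, depending on $\beta$, $R$ and $\kappa$, such that $V_\beta[\bx\cdot\by]=\bu(\by)\cdot\bx$, and the task reduces to identifying $\bu$.

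The intertwining relation gives $T_i V_\beta[\bx\cdot\by]=V_\beta[\partial(\bx\cdot\by)/\partial x_i]=V_\beta[y_i]=y_i$, producing $N$ equations for $\bu$. Applying $T_i$ to a general linear function $\bu\cdot\bx$, and using the identity $(1-\sigma_{\balpha})(\bu\cdot\bx)=2(\bu\cdot\balpha)(\balpha\cdot\bx)/|\balpha|^2$ to simplify the reflection terms, one finds
\begin{equation*}
T_i(\bu\cdot\bx)=u_i+\sum_{\balpha\in R_+}\frac{2\alpha_i k(\balpha)}{|\balpha|^2}\,(\bu\cdot\balpha).
\end{equation*}
Collecting these into vector form and substituting $k(\balpha)=\beta\kappa(\balpha)/2$ gives the linear system $(I+K)\bu=\by$, where
\begin{equation*}
K:=\sum_{\balpha\in R_+}\beta\kappa(\balpha)\,\frac{\balpha\balpha^T}{|\balpha|^2}.
\end{equation*}

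I would then analyze $K$ through the decomposition $\RR^N=\spn(R)\oplus\spn(R)^\perp$. Since $\balpha\cdot\bv=0$ for every $\balpha\in R$ and every $\bv\in\spn(R)^\perp$, the operator $K$ annihilates $\spn(R)^\perp$, so $(I+K)^{-1}$ acts there as the identity. The crucial step is to show $K|_{\spn(R)}=(\beta\gamma/d_R)\,I_{\spn(R)}$. For any $\rho\in W$, the substitution $\balpha\mapsto\rho^{-1}\balpha$, combined with the $W$-invariance $\kappa(\rho^{-1}\balpha)=\kappa(\balpha)$ and the isometry $|\rho^{-1}\balpha|=|\balpha|$, shows that $\rho K\rho^T=K$; that is, $K$ commutes with every element of $W$. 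Assuming $R$ is irreducible so that $W$ acts irreducibly on $\spn(R)$, Schur's lemma forces $K|_{\spn(R)}=c\,I_{\spn(R)}$ for some scalar $c$, and a trace calculation gives $c\cdot d_R=\text{tr}(K)=\sum_{\balpha\in R_+}\beta\kappa(\balpha)\,\text{tr}(\balpha\balpha^T)/|\balpha|^2=\beta\gamma$, i.e., $c=\beta\gamma/d_R$.

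Inverting on each subspace yields $\bu=\by_\perp+\by_\parallel/(1+\beta\gamma/d_R)$, and therefore $V_\beta[\bx\cdot\by]=\bx\cdot\bu=\bx_\perp\cdot\by_\perp+\bx_\parallel\cdot\by_\parallel/(1+\beta\gamma/d_R)$, as claimed. The main obstacle is the irreducibility input needed for Schur's lemma: for an irreducible root system this is standard, but for reducible $R$ one must decompose into irreducible factors, and the uniform form of the answer relies on the (implicit) assumption, consistent with the examples considered in the paper, that the ratio $\gamma/d_R$ agrees across the factors.
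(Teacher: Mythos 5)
Your proof follows essentially the same route as the paper's: both use the intertwining relation to reduce the problem to inverting $\bI+\beta\sum_{\balpha\in R_+}\kappa(\balpha)\balpha\balpha^T/\alpha^2$, and both identify the sum as $(\gamma/d_R)\bI$ on $\spn(R)$ by representation-theoretic averaging over $W$ --- you via Schur's lemma plus a trace computation, the paper via Schur's orthogonality relations, which in this setting is the same input. Your explicit caveat about irreducibility is well taken but introduces no restriction the paper does not already carry implicitly, since the orthogonality relation $\sum_{\rho\in W}[\rho]_{jn}[\rho]_{n^\prime l}=(|W|/d_R)\delta_{nn^\prime}\delta_{jl}$ used in the paper likewise presumes that $W$ acts irreducibly on $\spn(R)$.
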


\begin{remark}
Theorem~\ref{steadystatetheorem} gives the relaxation due to the exchange term in \eqref{dunklexplicitbackwardfpe}. In fact, the first-order correction arises from the expansion
\begin{equation}
\int_{\RR^N}\rme^{-x^2/2t}V_\beta\rme^{\sqrt{\beta/t}\bx\cdot\bY}\mu(\bx)\ud\bx=1+\sqrt{\frac{\beta}{t}}V_\beta\bx_0\cdot\bY+O(x_0^2Y^2/t),
\end{equation}
where $\mu(\bx)=\delta^{(N)}(\bx-\bx_0)$. However, if the initial distribution is $W$-invariant,
\begin{equation}
\mu(\bx)=\frac{1}{|W|}\sum_{\rho\in W} \delta^{(N)}(\bx-\rho\bx_0),
\end{equation}
the first-order correction vanishes due to the sum
\begin{equation}
\sum_{\rho\in W}\rho\bx_0\cdot\bY=0.
\end{equation}
At the same time, the exchange term in \eqref{dunklexplicitbackwardfpe} vanishes when $\mu(\bx)$ is $W$-invariant, and only the drift term drives the relaxation. Therefore, the correction term in Theorem~\ref{steadystatetheorem} is produced only by the exchange mechanism. Consequently, the relaxation due to the drift mechanism is of higher order, namely $O(x_0^2r(\delta)^2\gamma/t)$. This means that the relaxation of due to the drift term is faster than the relaxation due to the exchange term. We will discuss this fact in more detail in after Theorem~\ref{TheoremFreezingLimit} and Lemma~\ref{FreezingLimitDunklKernelFullRank} below.
\end{remark}

The proofs of Theorem~\ref{steadystatetheorem} and Lemma~\ref{vbetaonlinearfns} are given in Section~\ref{preparationssteady}. Note that the denominator of the correction term in Theorem~\ref{steadystatetheorem} comes from Lemma~\ref{vbetaonlinearfns}. Our result can be readily extended to a large class of initial distributions. We assume that $\mu(\bx)$ is Riemann-integrable, and we introduce a monotonically-decreasing function $\tau(x)$, which we call the tail function, such that for some large positive constant $X$, the relationship
\begin{equation}\label{distributiontail}
\tau(x)\geq x^{N-1}\int_{\Omega_N}\mu(\bx)\ud\Omega_N, 
\end{equation}
where $\Omega_N$ is the solid angle in $N$-dimensional Euclidean space, is satisfied when $x>X$. Let the integral of the tail function be denoted by
\begin{equation}\label{tailintegral}
T(y):=\int_y^\infty \tau(x)\ud x.
\end{equation}
Note that, because $\mu(\bx)$ is Riemann-integrable, $T(y)$ is monotonically-decreasing and non-negative for any $y>0$. Then, for any given $\epsilon>0$, there exists a value $C=C(\epsilon)>0$ such that the relationship
\begin{equation}\label{candepsilon}
T(C)\leq\epsilon
\end{equation}
is satisfied. Table~\ref{correctionexponents} gives the form of $T(C)$ for a few types of tail function $\tau(x)$.
\begin{table}[t!]
\begin{tabular}{r|ccc}
\hline
\hline
Form of $\tau(x)$\ &\ 0 for $x\geq C$\ &\ $\rme^{-(x/l)^\xi}$, $\xi,l>0$\ &\ $x^{-(1+\zeta)}$, $\zeta>0$\\
\hline
$T(C)$\ &\ $0$ & $\frac{l}{\xi}(\frac{l}{C})^{\xi-1} \rme^{-(C/l)^\xi}$, for $C/l$ large & $C^{-\zeta}/\zeta$\\
\hline
\hline
\end{tabular}
\caption{Form of the tail integral $T(C)$ given by \eqref{tailintegral}.}\label{correctionexponents}
\end{table}
For the given value of $\epsilon$, the result from Theorem~\ref{steadystatetheorem} yields:
\begin{equation}
\langle\phi\rangle_{t}=\langle\phi\rangle\Big\{1+O\Big[\sqrt{\frac{\beta\gamma}{t}}\frac{ r(\delta) C(\epsilon)}{(1+\beta \gamma /d_R)}\Big]+O(\epsilon)\Big\}.
\end{equation}
We omit the proof, as it only requires the use of the mean value theorem for integrals. 

Let us consider the one-dimensional Dunkl process as an example. The root system is $R=B_1$ and $\gamma=1$, and the two Weyl chambers are the intervals $(-\infty,0)$ and $(0,+\infty)$. The steady-state distribution is given by
\begin{equation}\label{1dsteadystatedistribution}
\frac{\rme^{-\beta F_{B_1}(Y)}}{z_\beta}=\frac{\rme^{-\beta Y^2/2}}{z_\beta}|Y|^\beta.
\end{equation}
In this case, $d_{B_1}=N=1$. The probability density of this type of Dunkl process is one of the few that can be calculated exactly. Denoting the Bessel functions of the second kind by $I_\nu(x)$, it is given by \cite{rosler98, rosler08}
\begin{equation}\label{explicit1dtransition}
p_{B_1}(t,y|x)=\frac{\rme^{-(x^2+y^2)/2t}}{2t}\frac{|y|^\beta}{(xy)^{(\beta-1)/2}}\Big[I_{(\beta+1)/2}\Big(\frac{xy}{t}\Big)+I_{(\beta-1)/2}\Big(\frac{xy}{t}\Big)\Big].
\end{equation}
\begin{figure}[!t]
  \centering
  \subfloat[$t=2$]{\includegraphics[width=0.3\textwidth]{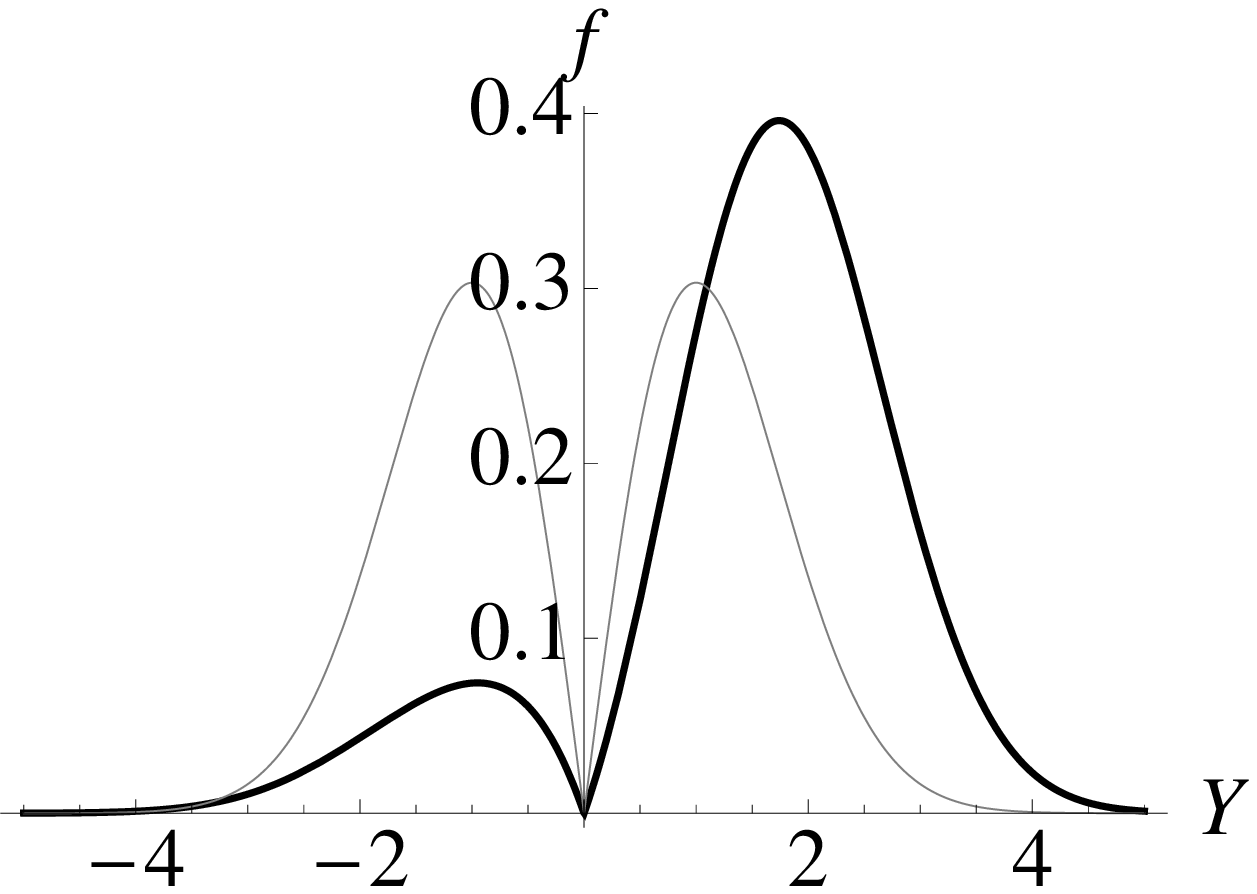}}\                 
  \subfloat[$t=20$]{\includegraphics[width=0.3\textwidth]{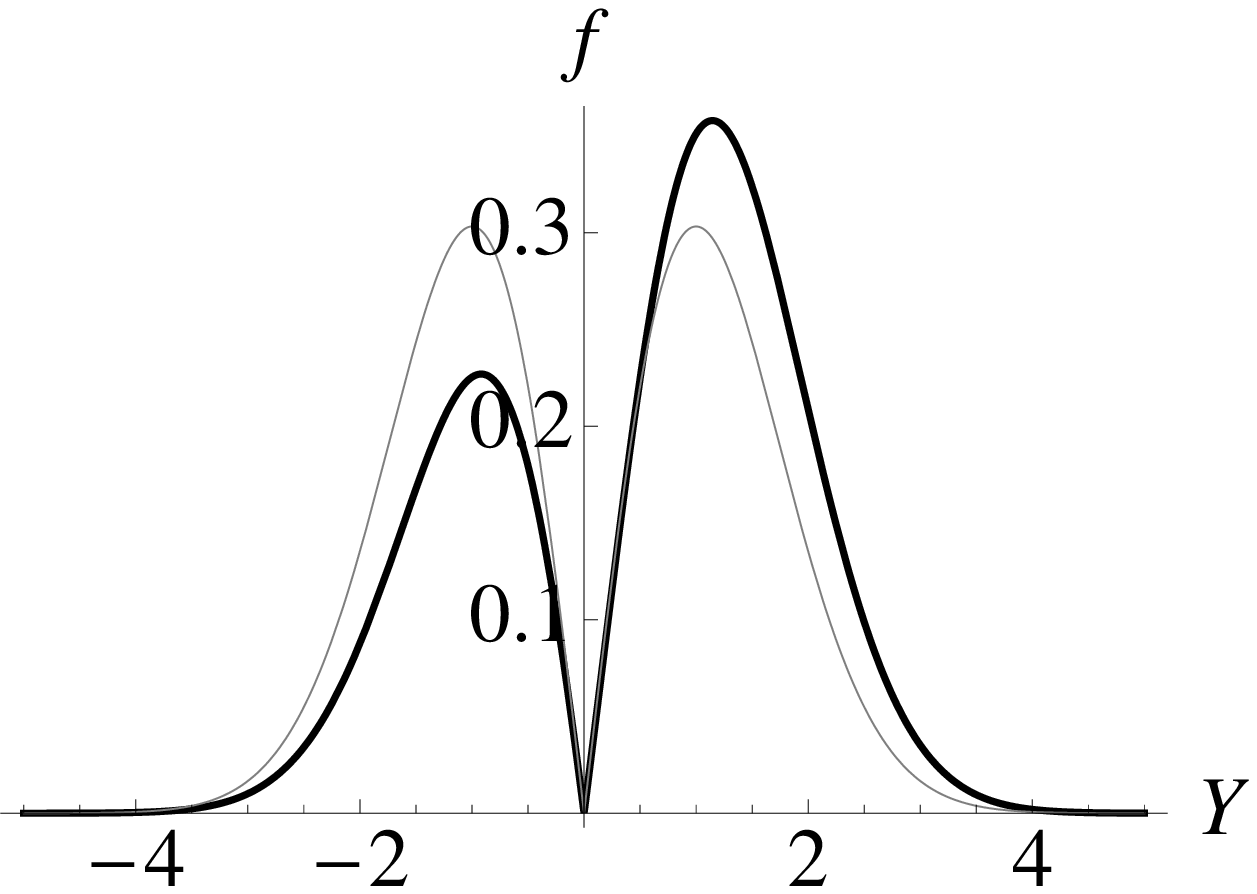}}\\
  
  \subfloat[$t=200$]{\includegraphics[width=0.3\textwidth]{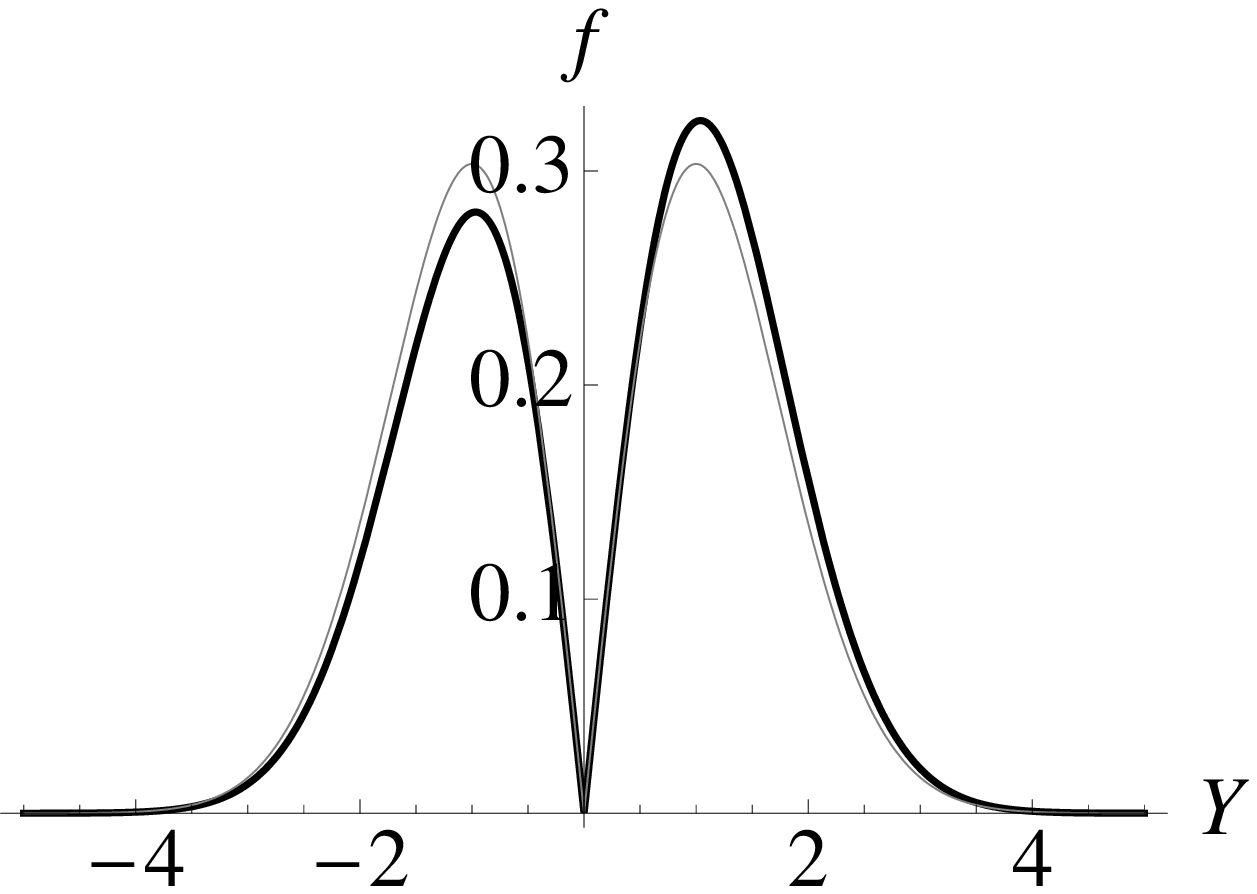}}\
  \subfloat[$t=2000$]{\includegraphics[width=0.3\textwidth]{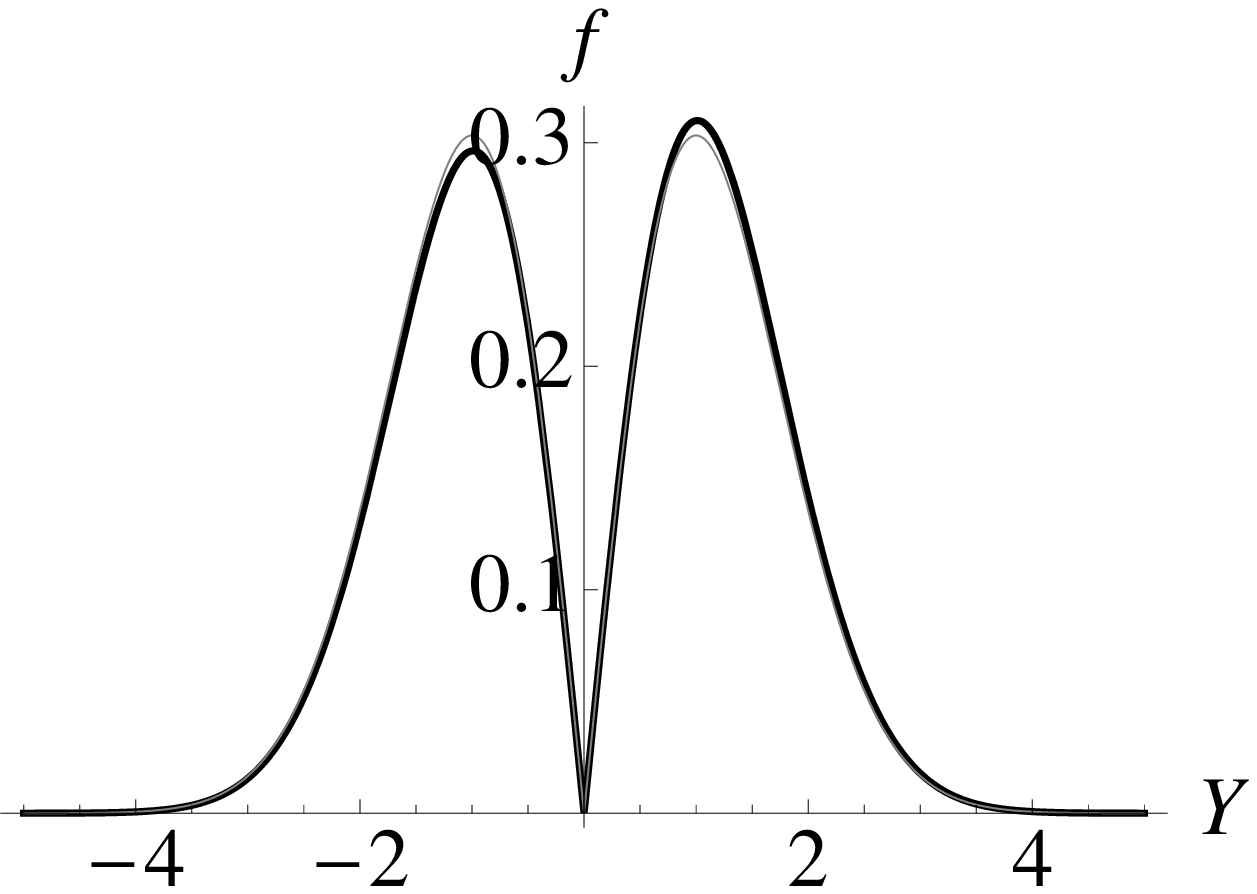}}\\
  \caption{Scaled probability density $f(t,Y)$ (black line) and steady-state probability density (gray line) of the one-dimensional Dunkl process with initial distribution $\mu(x)=\delta(x-2)$ for $\beta=1$ and several values of $t$.} \label{figureonedimensionalrelaxing}
\end{figure}

Figure~\ref{figureonedimensionalrelaxing} depicts the time evolution of the scaled probability density of a one-dimensional Dunkl process with the initial distribution $\mu(x)=\delta(x-x_0)$ with $x_0=2$ and $\beta=1$, and we see that the scaled density converges to the steady-state density as $t$ grows in value. As an example, let us choose $\phi(Y)=Y+1$. Thanks to \eqref{explicit1dtransition}, we can calculate the expectation $\langle \phi \rangle_{t,x_0=2}$ directly,
\begin{equation}
\langle \phi \rangle_{t,x_0=2}=\int_{-\infty}^\infty (Y+1)f(t,Y)\ud Y=1+\frac{2}{\sqrt{t}}.
\end{equation}
Because \eqref{1dsteadystatedistribution} is an even function, it is easy to see that $\langle \phi \rangle=1$. Then, we can write
\begin{equation}
\langle \phi \rangle_{t,x_0=2}=\langle \phi \rangle\Big[1+\frac{2}{\sqrt{t}}\Big]=\langle \phi \rangle[1+O(t^{-1/2})],
\end{equation}
which is consistent with Theorem~\ref{steadystatetheorem}. 

Note that the correction term in Theorem~\ref{steadystatetheorem} depends on $t$, $r(\delta)$ and $x$ in the expected ways: a larger relaxation time is required for large values of $r(\delta)$ and $x$. However, its dependence on $\beta$ is not simple. That is, the correction term is of order $\sqrt{\beta}$ when $\beta$ is small, and it is of order $1/\sqrt{\beta}$ for large $\beta$. Because the correction term at very large values of $\beta$ is small, one may be tempted to take the limit $\beta\to\infty$ from Theorem~\ref{steadystatetheorem}. However, the time required for the theorem to hold is given by $t\gg \beta\gamma x^2 r(\delta)^2$, which tends to infinity in the limit. This means that Theorem~\ref{steadystatetheorem} is not well suited for the strong-coupling limit, and our second result addresses this situation.

\subsection{Approach to the strong-coupling limit ($\beta \to \infty$)}

Here, we consider the case in which $t>0$ is fixed, and $\beta$ tends to infinity. In this regime, we can use a second-order Taylor expansion for $F_R(\bY)$ in order to obtain an approximation of the steady-state distribution function $\exp[-\beta F_R(\bY)]/z_\beta$ using a sum of multivariate Gaussians, which we show in detail in the Appendix. There, we show that the minima of $F_R(\bY)$ occur at the peak set of $R$, which we denote  by $\{\bos_i\}_{i=1}^{|W|}$. It is known that the peak set of the root systems of type $A_{N-1}$ and $B_N$ is given by the zeroes of the Hermite and Laguerre polynomials, which are also known as Fekete points.\cite{deift00} However, we do not expect the peak sets of other root systems to be given by the zeroes of a set of classical orthogonal polynomials in general. The Gaussian approximation of $\exp[-\beta F_R(\bY)]/z_\beta$ is given by
\begin{equation}\label{gaussianapproximationFR}
G_\beta(\bY):=\frac{\beta^{N/2}\sqrt{\det \bH(\bos_1)}}{(2\pi)^{N/2}|W|}\sum_{i=1}^{|W|}\exp[-\beta (\bY-\bos_i)^T \bH(\bos_i) (\bY-\bos_i)/2],
\end{equation}
where we have denoted the Hessian matrix of $F_R(\bY)$ by $\bH(\bY)$ [\eqref{hessianmatrixcomponents} in the Appendix], and we denote the eigenvalues of $\bH$ by $\{\lambda_i\}_{i=1}^{d_R}$. For finite time $t$, we approximate the scaled distribution $f(t,\bY)$ in the same way,
\begin{equation}\label{perturbedgaussianapproximationFR}
\tilde{G}_\beta(\bY):=\frac{\beta^{N/2}\sqrt{\det \tilde{\bH}(\tilde{\bos}_1)}}{(2\pi)^{N/2}|W|}\sum_{i=1}^{|W|}\tilde{c}_i\rme^{-\beta (\bY-\tilde{\bos_i})^T \tilde{\bH}(\tilde{\bos}_i) (\bY-\tilde{\bos}_i)/2}.
\end{equation}
$\tilde{G}_\beta(\bY)$ is a function of the same form as $G_\beta(\bY)$, where the position of the peaks $\{\tilde{\bos}_i\}_{i=1}^{|W|}$, the Hessian matrix $\tilde{\bH}(\bY)$, the eigenvalues $\{\tilde{\lambda}_j\}_{j=1}^{d_R}$, and the coefficients $\{\tilde{c}_i\}_{i=1}^{|W|}$, are time dependent. For the dependence, we have the following theorem:

\begin{theorem}\label{TheoremFreezingLimit}
Consider the initial distribution $\mu(\bx)=\delta^{(N)}(\bx-\bx_0)$ with $\bx_0\in\spn(R)$. 
For $\beta\gg d_R/\gamma$ and $\beta t\gg d_R^2x^2r(\delta)^2/ \gamma$, the time-dependent expectation of a test function $\phi(\bY)$ is approximated by
\begin{equation}\label{largebetatimeexpectation}
\langle\phi\rangle_{t,\bx_0}\approx\int_{\RR^N}\phi(\bY)\tilde{G}_\beta(\bY)\ud\bY,
\end{equation}
where $\tilde{G}_\beta(\bY)$ converges to $G_\beta(\bY)$ in the sense that its peaks lie at 
\begin{equation}
\tilde{\bos}_i(t)=(1+x_0^2/2\gamma\beta t)\bos_i,
\end{equation}
the variances of the Gaussians in the direction of the eigenvectors of $\bH(\bos_i)$ are given by
\begin{equation}
1/\beta\tilde{\lambda}_j(t)=[1+x_0^2/\gamma\beta t]/\beta \lambda_j,
\end{equation}
and the coefficients of the Gaussians are given by
\begin{equation}
\tilde{c}_i(t)=1+\frac{d_R}{\gamma}\frac{\bx_0\cdot\bos_i}{\sqrt{\beta t}}.
\end{equation}
\end{theorem}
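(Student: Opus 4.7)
The plan is to reduce \eqref{largebetatimeexpectation} to a local Gaussian computation around each peak $\bos_i$ of $F_R(\bY)$. Substituting $\mu(\bx)=\delta^{(N)}(\bx-\bx_0)$ into \eqref{scaleddensityfconverging} yields
\begin{equation}
f(t,\bY) = \frac{\rme^{-\beta F_R(\bY)}}{z_\beta}\,\rme^{-x_0^2/2t}\,V_\beta\,\rme^{\sqrt{\beta/t}\,\bx_0\cdot\bY},
\end{equation}
so the task reduces to analysing this product in the regime $\beta\to\infty$. The argument is driven by two ingredients: the Gaussian approximation $G_\beta(\bY)$ of $\rme^{-\beta F_R(\bY)}/z_\beta$ developed in the Appendix (justified because $F_R$ is convex on each Weyl chamber with minima at the peak set $\{\bos_i\}_{i=1}^{|W|}$, all at distance $\sqrt{\gamma}$ from the origin), and the strong-coupling expansion of $V_\beta\rme^{\bv\cdot\bY}$ supplied by Lemma~\ref{FreezingLimitDunklKernelFullRank}, which evaluates the intertwining factor in a neighbourhood of each peak.

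With both ingredients in hand, I would perform a local expansion at each $\bos_i$. Setting $\bY=\bos_i+\bu/\sqrt{\beta}$, the Gaussian factor from $G_\beta$ reduces to a factor proportional to $\exp[-\tfrac12\bu^T\bH(\bos_i)\bu]$, while the intertwining factor, evaluated through Lemma~\ref{FreezingLimitDunklKernelFullRank}, produces a prefactor $\rme^{\sqrt{\beta/t}\,\bx_0\cdot\bos_i}$ together with a perturbative series whose first corrections are linear and quadratic in $\bu$, carrying the small parameters $1/\sqrt{\beta t}$ and $1/\beta t$ respectively. Completing the square in $\bu$ then performs three operations at once: the linear-in-$\bu$ correction shifts the centre of the Gaussian from $\bos_i$ to $\tilde{\bos}_i$; the quadratic-in-$\bu$ correction deforms $\bH(\bos_i)$ into $\tilde{\bH}(\tilde{\bos}_i)$, giving the new eigenvalues $\tilde{\lambda}_j$; and the constant residual generated by the square completion, combined with the global prefactor $\rme^{-x_0^2/2t}$ and normalised against the steady-state weight at $\bos_i$, produces the amplitude correction $\tilde{c}_i$.

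To obtain the three explicit formulas I would then expand each correction to first order in $1/\beta$ and $1/\sqrt{\beta t}$, using the identities for $\bH(\bos_i)$ derived in the Appendix together with $|\bos_i|^2=\gamma$. Specifically, the shift of the peak along $\bos_i$ follows from combining the linear term from $V_\beta$ (proportional to $\bx_0\cdot\bu/\sqrt{t}$) with the eigendirection of $\bH(\bos_i)$ containing $\bos_i$, producing $\tilde{\bos}_i=(1+x_0^2/2\gamma\beta t)\bos_i$; the covariance correction gives $1/\beta\tilde{\lambda}_j=[1+x_0^2/\gamma\beta t]/\beta\lambda_j$; and the constant residual, once $\rme^{-x_0^2/2t}$ is absorbed, yields $\tilde{c}_i=1+(d_R/\gamma)\bx_0\cdot\bos_i/\sqrt{\beta t}$.

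The main obstacle will be the careful book-keeping of error orders: to produce both the amplitude correction (of order $(\beta t)^{-1/2}$) and the peak and covariance corrections (of order $(\beta t)^{-1}$) simultaneously, one must retain the subleading terms of Lemma~\ref{FreezingLimitDunklKernelFullRank} and show that the neglected contributions — the higher-order terms in the Taylor expansion of $F_R$ about each $\bos_i$, the cubic and higher corrections from $V_\beta$, and the higher-order remainder of $\rme^{-x_0^2/2t}$ — remain subdominant in the joint regime $\beta\gg d_R/\gamma$ and $\beta t\gg d_R^2 x_0^2 r(\delta)^2/\gamma$. A secondary technical step is passing from the pointwise approximation of the integrand to the integrated statement \eqref{largebetatimeexpectation} for a reasonable class of test functions $\phi$, which should follow from the fact that $G_\beta$ concentrates on a tolerance region of radius $r(\delta)\sqrt{\gamma}$ of the kind employed in Theorem~\ref{steadystatetheorem}.
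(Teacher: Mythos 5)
Your overall architecture (Gaussian approximation of the steady-state weight plus Lemma~\ref{FreezingLimitDunklKernelFullRank} for the intertwining factor, localized near the peak set with an $r(\delta)$ tail estimate) matches the paper's, but two of your central steps misread what Lemma~\ref{FreezingLimitDunklKernelFullRank} actually delivers, and this breaks the derivation of $\tilde c_i$ and $\tilde\lambda_j$. The lemma does \emph{not} produce a prefactor $\rme^{\sqrt{\beta/t}\,\bx_0\cdot\bos_i}$ --- that is the behaviour of the bare exponential before intertwining, and it would make the peak amplitudes differ by exponentially large factors rather than by $O((\beta t)^{-1/2})$. What the lemma gives is
\begin{equation*}
V_\beta\,\rme^{\sqrt{\beta/t}\,\bx_0\cdot\bY}\approx\Big(1+\frac{d_R}{\gamma}\frac{\bx_0\cdot\bY}{\sqrt{\beta t}}\Big)\rme^{x_0^2Y^2/2\gamma t},
\end{equation*}
i.e.\ a $W$-invariant exponential (whose value at $Y^2=\gamma$ exactly cancels the global $\rme^{-x_0^2/2t}$) times a small non-$W$-invariant polynomial. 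It is this polynomial, evaluated at $\bY\approx\bos_i$ via the mean value theorem, that produces $\tilde c_i=1+(d_R/\gamma)\,\bx_0\cdot\bos_i/\sqrt{\beta t}$; the constant residual of your square completion is of order $x_0^4/\gamma\beta t^2$ (using $\bH(\bos_i)\bos_i=2\bos_i$) and cannot generate an $O((\beta t)^{-1/2})$ amplitude correction.

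Separately, a purely second-order local expansion of $F_R$ about the \emph{unshifted} $\bos_i$ does not give the stated variance correction. The quadratic part of $x_0^2Y^2/2\gamma t$ changes $\beta\bH(\bos_i)$ into $\beta\bH(\bos_i)-(x_0^2/\gamma t)\bI$, whose eigenvalues are $\beta\lambda_j-x_0^2/\gamma t$ rather than $\beta\lambda_j(1-x_0^2/\gamma\beta t)$; the missing piece, $-(x_0^2/\gamma\beta t)(\bH-\bI)$, comes from the third derivatives of $F_R$ multiplied by the peak shift, which you list among the ``neglected contributions'' but which enter at exactly the order $(\beta t)^{-1}$ you are trying to resolve. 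The paper sidesteps both problems by absorbing $\rme^{x_0^2Y^2/2\gamma t}$ into a globally deformed potential $\tilde F_R(\bY)=(1-x_0^2/\gamma\beta t)Y^2/2-\sum_{\balpha\in R_+}\kappa(\balpha)\log|\balpha\cdot\bY|+x_0^2/2\beta t$, whose minima $\tilde\bos_i=\bos_i/\sqrt{1-x_0^2/\gamma\beta t}$ and Hessians $\tilde\bH(\tilde\bos_i)=(1-x_0^2/\gamma\beta t)\bH(\bos_i)$ are computed exactly, with no square completion needed; your local route would need to carry the cubic terms of $F_R$ (or re-expand at the shifted minimum) to close.
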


In the limit where $\beta\to\infty$, it is easy to see that the scaled probability distribution of a Dunkl process for $t>0$ is given, in the sense of distributions, by
\begin{equation}\label{GeneralFreezingLimit}
\lim_{\beta\to\infty}f(t,\bY)=\frac{1}{|W|}\sum_{i=1}^{|W|}\delta^{(N)}(\bY-\bos_i).
\end{equation}
This equation highlights the fact that when $\beta\to\infty$, the path of the Dunkl process is deterministic, and it is given by the elements of the peak set of $R$. 

Theorem~\ref{TheoremFreezingLimit} depends directly on the following lemma:
\begin{lemma}\label{FreezingLimitDunklKernelFullRank}
For root systems with $d_R=N$, $\beta\gg N/\gamma$ and $N^2x^2y^2/\beta\gamma^2\ll 1$, 
\begin{equation}\label{EquationFreezingLimitKernelFullRank}
V_\beta \rme^{\sqrt{\beta}\bx\cdot\by}\approx \Big(1+\frac{N\bx\cdot\by}{\gamma\sqrt{\beta}}\Big) \exp\Big(\frac{x^2y^2}{2\gamma}\Big).
\end{equation}
\end{lemma}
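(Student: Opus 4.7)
The plan is to expand in a power series and analyze term by term. Since $V_\beta$ preserves the degree of homogeneous polynomials, I would write
\begin{equation*}
V_\beta e^{\sqrt{\beta}\bx\cdot\by}=\sum_{n=0}^{\infty}\frac{\beta^{n/2}}{n!}V_\beta[(\bx\cdot\by)^n],
\end{equation*}
and then attack $V_\beta[(\bx\cdot\by)^n]$ via the intertwining relation \eqref{intertwiningrelation} applied to $f=(\bx\cdot\by)^n$, which gives the recursion
\begin{equation*}
T_i V_\beta[(\bx\cdot\by)^n]=n y_i V_\beta[(\bx\cdot\by)^{n-1}].
\end{equation*}
The $n=1$ base case is supplied by Lemma~\ref{vbetaonlinearfns}, which in the full-rank case reduces to $V_\beta[\bx\cdot\by]=N\bx\cdot\by/(N+\beta\gamma)$, so each $V_\beta[(\bx\cdot\by)^n]$ is in principle determined iteratively.

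To solve the recursion in the freezing regime I would posit the inductive ansatz
\begin{equation*}
V_\beta[(\bx\cdot\by)^n]=\sum_{k=0}^{\lfloor n/2\rfloor}A_{n,k}^{(\beta)}(\bx\cdot\by)^{n-2k}(x^2 y^2)^k+R_n(\bx,\by),
\end{equation*}
built from the two basic $W$-covariant invariants $\bx\cdot\by$ and $x^2 y^2$, with $R_n$ collecting contributions that involve higher-rank, root-system-specific tensors. The key computational input is the trace identity
\begin{equation*}
\sum_{\balpha\in R_+}\kappa(\balpha)\frac{\alpha_i\alpha_j}{\alpha^2}=\frac{\gamma}{N}\delta_{ij},
\end{equation*}
which follows, for full-rank root systems whose Weyl group acts irreducibly on $\RR^N$, from the fact that the LHS is a $W$-invariant symmetric 2-tensor (hence a multiple of the identity) of trace $\gamma$. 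This identity makes the action of $T_i$ on the ansatz close on the $(\bx\cdot\by)^a(x^2 y^2)^b$ basis up to contributions absorbed into $R_n$; matching the recursion order by order yields $A_{2m,m}^{(\beta)}\sim(2m)!/[2^m m!(\beta\gamma)^m]$ for even powers and $A_{2m+1,m}^{(\beta)}\sim N(2m+1)!/[2^m m!(\beta\gamma)^{m+1}]$ for odd ones, with all other $A_{n,k}^{(\beta)}$ suppressed by extra factors of $N/(\beta\gamma)$. Assembling these dominant pieces against the prefactors $\beta^{n/2}/n!$ telescopes the even series to $\exp(x^2 y^2/(2\gamma))$ and the odd series to $(N\bx\cdot\by/\gamma\sqrt{\beta})\exp(x^2 y^2/(2\gamma))$, which is the claimed leading form.

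The hard part will be bounding $R_n$ and the subleading $A_{n,k}^{(\beta)}$ uniformly. Each application of $T_i$ produces contractions with higher-rank tensors $\sum_{\balpha\in R_+}\kappa(\balpha)\alpha_{i_1}\cdots\alpha_{i_{2\ell}}/\alpha^{2\ell}$ for $\ell\geq 2$, which are $W$-invariant but in general not proportional to symmetrizations of Kronecker deltas; the claim I would establish by induction on $n$ is that each such non-isotropic contraction carries an extra factor of $N/(\beta\gamma)$ relative to the leading isotropic contribution, so that under the hypothesis $N^2 x^2 y^2/(\beta\gamma^2)\ll 1$ these corrections stay strictly below the $O(1/\sqrt{\beta})$ precision of the statement. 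As both template and sanity check, the one-dimensional case ($R=B_1$, $N=\gamma=1$) admits the exact closed forms
\begin{equation*}
V_\beta[x^{2m}]=\frac{(2m-1)!!\,x^{2m}}{\prod_{j=0}^{m-1}(\beta+2j+1)},\qquad V_\beta[x^{2m+1}]=\frac{(2m+1)!!\,x^{2m+1}}{\prod_{j=0}^{m}(\beta+2j+1)},
\end{equation*}
whose large-$\beta$ expansion reproduces the lemma in this special case and displays precisely the scaling behavior that the general argument must recover.
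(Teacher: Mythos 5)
Your overall strategy differs from the paper's in an essential way: you drive the induction with the first-order intertwining relation $T_i V_\beta[(\bx\cdot\by)^n]=n y_i V_\beta[(\bx\cdot\by)^{n-1}]$, whereas the paper works with the Dunkl Laplacian, i.e.\ with $\sum_i T_i^2 V_\beta \rme^{\bx\cdot\by}=y^2V_\beta\rme^{\bx\cdot\by}$, which couples the $n$th term to the $(n-2)$th. That choice is not cosmetic. The Dunkl Laplacian preserves $W$-invariance, and the paper first proves (Lemma~\ref{winvariance}) that every limit $V_\infty$ of a convergent rescaled term is $W$-invariant; combining these two facts, the limits $L_m=\lim_\beta V_\beta\beta^m(\bx\cdot\by)^{2m}/(2m)!$ satisfy a closed first-order recursion $y^2L_{m-1}=\sum_{\balpha\in R_+}\kappa(\balpha)\,\balpha\cdot\bnabla L_m/(\balpha\cdot\bx)$ in which all the anisotropic, root-system-specific structure has already been killed by the limit. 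Your route has neither of these structural inputs, and so the entire burden falls on the step you yourself flag as ``the hard part.''

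That step is a genuine gap, not a bookkeeping detail. When you apply the difference part of $T_i$ to $(\bx\cdot\by)^{n-2k}(x^2y^2)^k$, the term $[1-\sigma_{\balpha}](\bx\cdot\by)^a$ expands in powers of $(\balpha\cdot\bx)(\balpha\cdot\by)/\alpha^2$, and after dividing by $\balpha\cdot\bx$ and summing you meet contractions of $\sum_{\balpha\in R_+}\kappa(\balpha)\alpha_{i_1}\cdots\alpha_{i_{2\ell}}/\alpha^{2\ell}$ for $\ell\geq 2$. Already at $n=2$ the coefficient of the quartic tensor in the equation for $V_\beta[(\bx\cdot\by)^2]$ is $O(\beta)\cdot A_{2,0}$, i.e.\ of the same nominal order in $\beta$ as the isotropic term that fixes $A_{2,1}$: every contraction, isotropic or not, enters with exactly one explicit factor of $\beta$ from the renormalized multiplicities. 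So the asserted extra suppression by $N/(\beta\gamma)$ of the non-isotropic pieces does not follow from power counting in $\beta$ alone; it is precisely the nontrivial content of the lemma (the paper phrases it as the decay being $\beta^{-\lfloor(n+1)/2\rfloor}$ rather than the naive $\beta^{-n}$, and even states in its conclusions that it does not know how to see this directly from the explicit expansion \eqref{explicitVbeta} of $V_\beta$). Without an actual inductive proof of that suppression, your coefficient identifications $A_{2m,m}^{(\beta)}$ and $A_{2m+1,m}^{(\beta)}$ are reverse-engineered from the answer rather than derived, and the telescoping to $\exp(x^2y^2/2\gamma)\bigl(1+N\bx\cdot\by/\gamma\sqrt{\beta}\bigr)$ is unsupported. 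The trace identity you invoke and the exact rank-one formulas are both correct (the latter match Dunkl's closed form for $B_1$ and do reproduce the lemma in one dimension, where no anisotropic tensors exist), but the one-dimensional case is exactly the case that cannot probe the difficulty. To repair the argument you would either have to prove the suppression of the higher-rank contractions directly, or do what the paper does: establish $W$-invariance of the limits first and then pass to the Dunkl Laplacian, whose difference part loses the expected factor of $\beta^2$ and yields a recursion that closes on the two invariants you want.
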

Indeed, it is due to this exponential form that the perturbation caused by the initial distribution presents itself in $\tilde{G}_\beta(\bY)$ as varying coefficients for each Gaussian, and as a simple power-law correction in the location of the peaks and the variances of the approximating Gaussians. The proofs of Theorem~\ref{TheoremFreezingLimit} and Lemma~\ref{FreezingLimitDunklKernelFullRank} are given in Section~\ref{proofsfreezing}. 

\begin{remark}
Because we have a clearer idea of the form of $f(t,\bY)$ when $\beta$ is large in terms of the location of the Gaussian peaks, their variances and their coefficients, we can isolate the effect of the exchange and drift mechanisms on the function $\tilde G_\beta(\bY)$. Indeed, the effect of the exchange mechanism is found in the coefficients of the Gaussians, which tend to 1 ($\tilde c_i\to1$) as $(\beta t)^{-1/2}$. The correction which appears in the coefficients is dependent upon the product $\bx_0\cdot\bY$, and when the initial distribution is $W$-invariant, these corrections vanish in the same way as the correction term in Theorem~\ref{steadystatetheorem}. Therefore, the effect of the drift mechanism is isolated as the corrections in the shape of $f(t,\bY)$ relative to the approximate steady-state distribution $G_\beta(\bY)$. These corrections are all of order $(\beta t)^{-1}$, which means that if a Dunkl process starts from a non-$W$-invariant initial distribution, the peaks of the distribution $f(t,\bY)$ will settle to their steady-state locations and widths before their heights relax to the same value.
\end{remark}

Theorem~\ref{TheoremFreezingLimit} can be extended to general $\mu(\bx)$ which satisfy condition \eqref{conditioninitialdistribution} in the same way as Theorem~\ref{steadystatetheorem}. Given $\mu(\bx)$ and a parameter $\epsilon>0$, we can find a number $C=C(\epsilon)$ such that \eqref{candepsilon} is satisfied. With $\epsilon$ and $C(\epsilon)$, we have
\begin{equation}
\langle\phi\rangle_{t}=\int_{\RR^N}\phi(\bY)\tilde{G}_\beta(\bY)\ud\bY[1+O(\epsilon)],
\end{equation}
where $\tilde{\bos}_i=(1+C(\epsilon)^2/2\gamma\beta t)\bos_i$, $1/\beta\tilde{\lambda}_j=[1+C(\epsilon)^2/\gamma\beta t]/\beta \lambda_j$ and $\tilde{c}_i=1+d_R \bar{\bx}_\epsilon\cdot\bos_i/\gamma\sqrt{\beta t}$. Here, $\bar{\bx}_\epsilon$ is given by
\begin{equation}
\bar{\bx}_\epsilon=\int_{\bx<C(\epsilon)}\bx\mu(\bx)\ud\bx=O[C(\epsilon)].
\end{equation}

Let us consider the one-dimensional Dunkl process as an example. In this case, the function $F_{B_1}(Y)$ is given by
\begin{equation}
F_{B_1}(Y)=\frac{Y^2}{2}-\log |Y|,
\end{equation}
the peak set is found to be $s=\pm1$, and the second derivative of $F_{B_1}(Y)$ is equal to 2 when $Y=\pm1$. 
We approximate the process density $f(t,Y)$ with the form \eqref{perturbedgaussianapproximationFR}. The result is
\begin{equation}\label{innerapproximationonedimensionlargebeta}
f(t,Y)\approx\frac{1}{2}\sqrt{\frac{\beta\tilde{h}}{2\pi}}(\tilde{c}_{+}\rme^{-\beta\tilde{h}(Y-\tilde{s})^2/2}+\tilde{c}_{-}\rme^{-\beta\tilde{h}(Y+\tilde{s})^2/2}),
\end{equation}
where 
\begin{eqnarray}
\tilde h&=& 2\Big(1-\frac{x_0^2}{\beta t}\Big),\nonumber\\
\tilde s&=&\frac{1}{\sqrt{1-x_0^2/\beta t}}\approx 1+\frac{x_0^2}{2\beta t},\nonumber\\
\tilde{c}_\pm&=&1\pm\frac{x_0}{\sqrt{\beta t}}.
\end{eqnarray}
Clearly, the peak of these Gaussians converges to $\pm1$ with a correction of order $(\beta t)^{-1}$. Similarly, their variance converges to $1/2\beta$ with a correction of order $(\beta t)^{-1}$. However, the coefficients of the Gaussians converge to 1 more slowly, as $(\beta t)^{-1/2}$.

\begin{figure}[!t]
  \subfloat[$\beta=2$]{\includegraphics[width=0.3\textwidth]{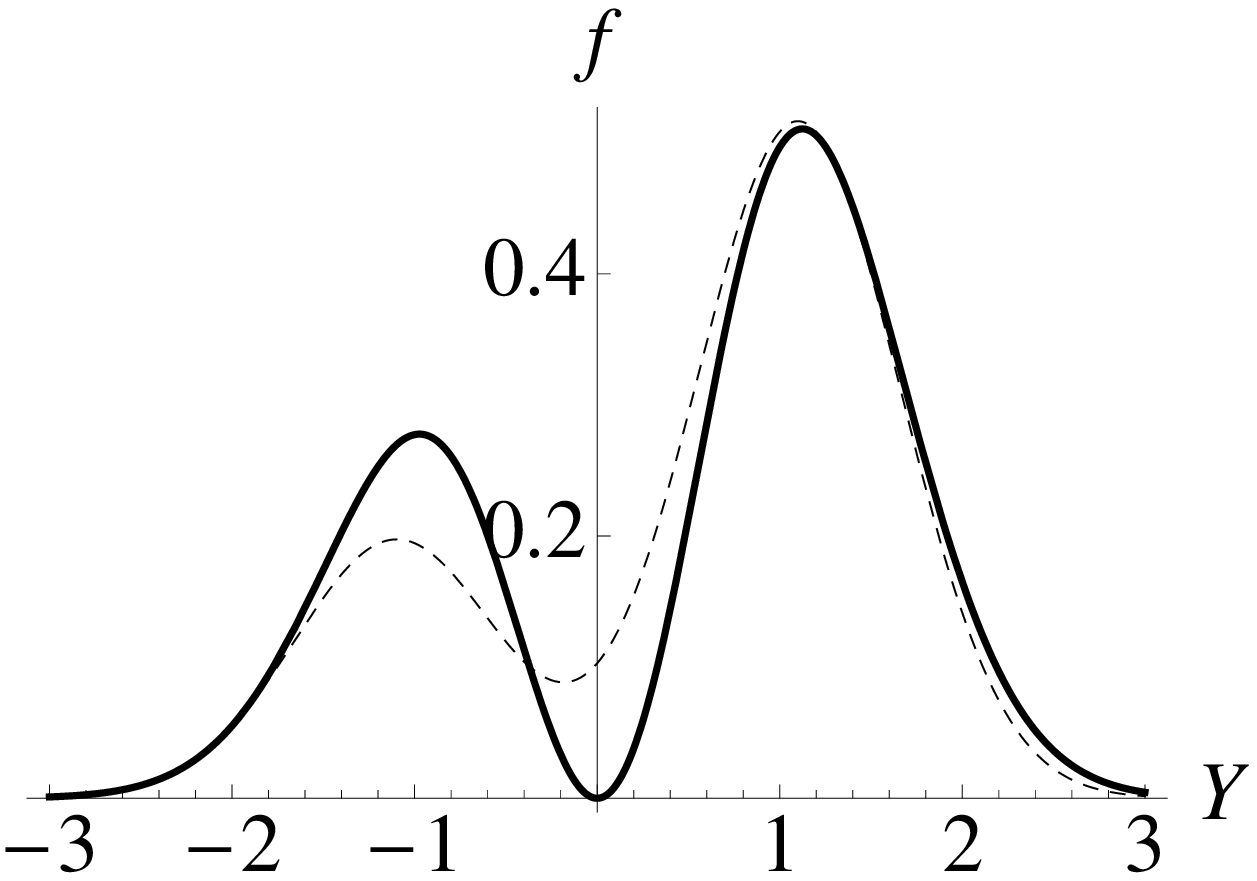}}\                 
  \subfloat[$\beta=100$]{\includegraphics[width=0.3\textwidth]{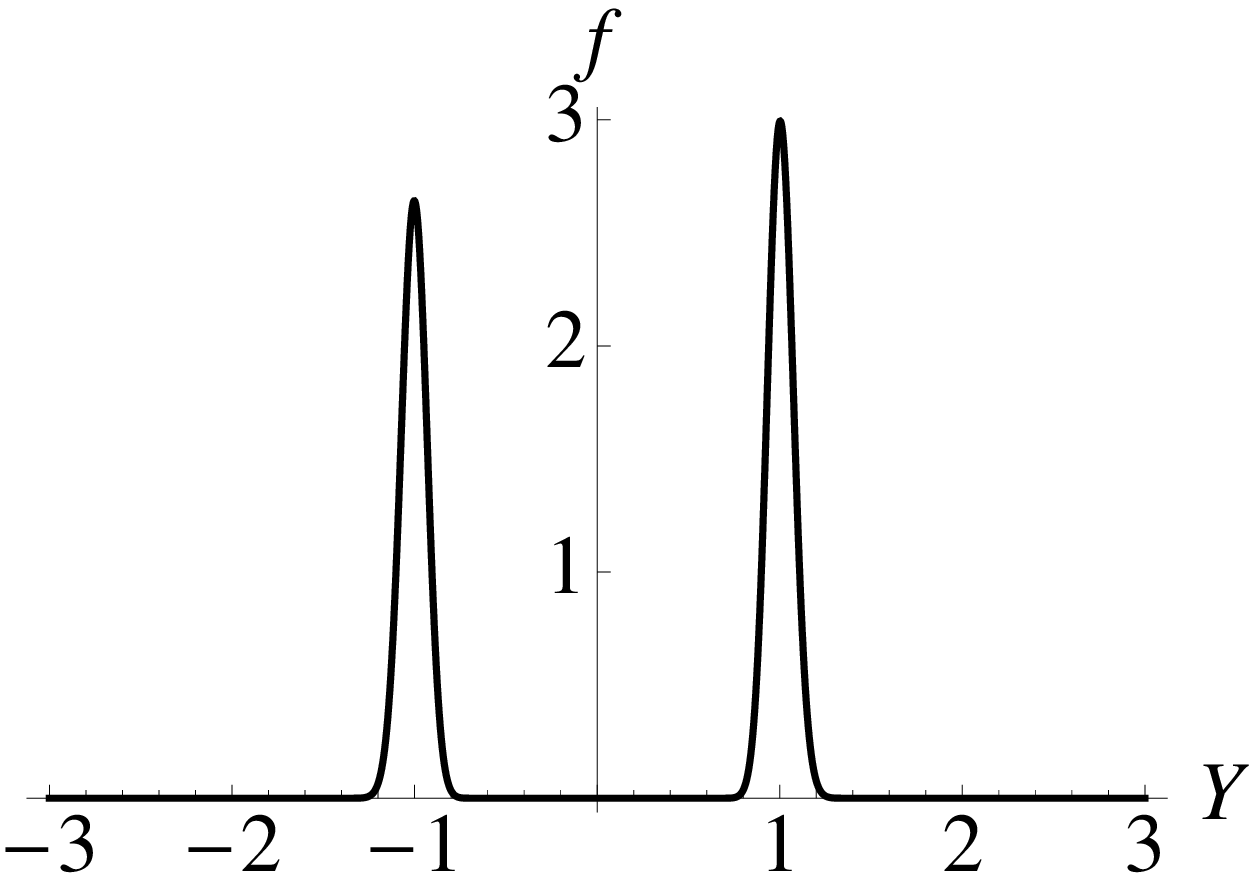}}\
  \subfloat[$\beta=5000$]{\includegraphics[width=0.3\textwidth]{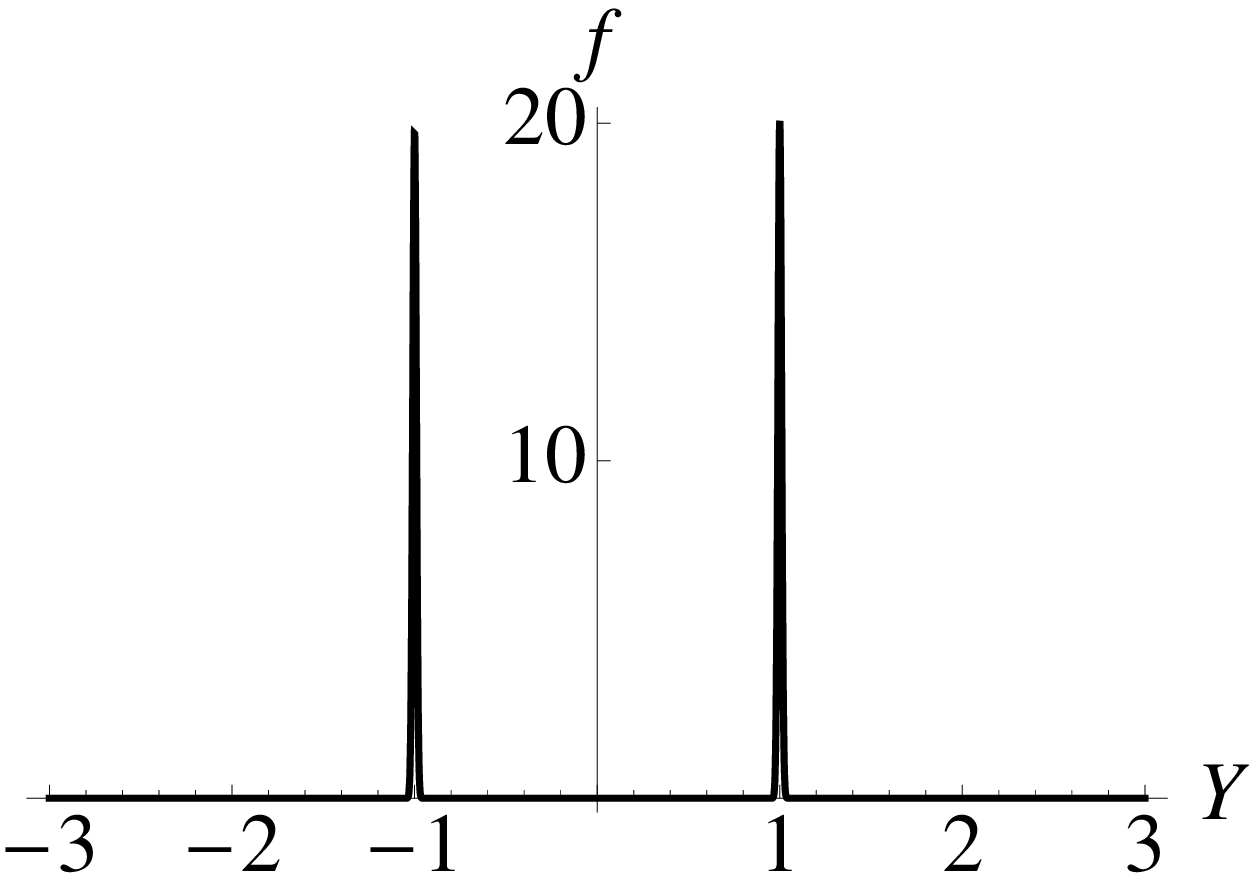}}\\
  \caption{Scaled distribution $f(t,Y)$ (solid line) and $\tilde{G}_\beta (Y)$ given in \eqref{innerapproximationonedimensionlargebeta} (dashed line) for the one-dimensional Dunkl processes with $\mu(x)=\delta(x-2)$ and $t=10$ for various values of $\beta$. Note that the curves are indistinguishable at $\beta\geq100$. As $\beta\to\infty$, both functions tend to a sum of delta functions of equal amplitude located at $Y=\pm 1$.}
  \label{figurefreezingB1}
\end{figure}
We illustrate the approach to the limit $\beta \to \infty$ for the one-dimensional case and the initial distribution $\mu(x)=\delta(x-x_0)$ with $x_0=2$ at $t=10$ in Figure~\ref{figurefreezingB1}. When $\beta=2$, $\tilde{G}_\beta(\bY)$ and $f(t,\bY)$ are clearly different, but when $\beta=100$, the curves appear to fit perfectly well. In addition, at $\beta=100$ the peaks are centered at $Y=\pm 1$, and their width is given by $\sqrt{2\sigma^2}\approx 1/\sqrt{\beta}=0.1$. However, the amplitude of the peaks is still uneven. This is evidence of the fact that the correction due to the drift term in \eqref{dunklexplicitbackwardfpe} is already very small, but the correction due to the exchange term is not. When $\beta=5000$, the peaks have the appearance of delta functions, and most importantly, their amplitudes are almost equal, as we expected.

\begin{figure}[!t]
  \centering
  \subfloat[$t=1$]{\includegraphics[width=0.3\textwidth]{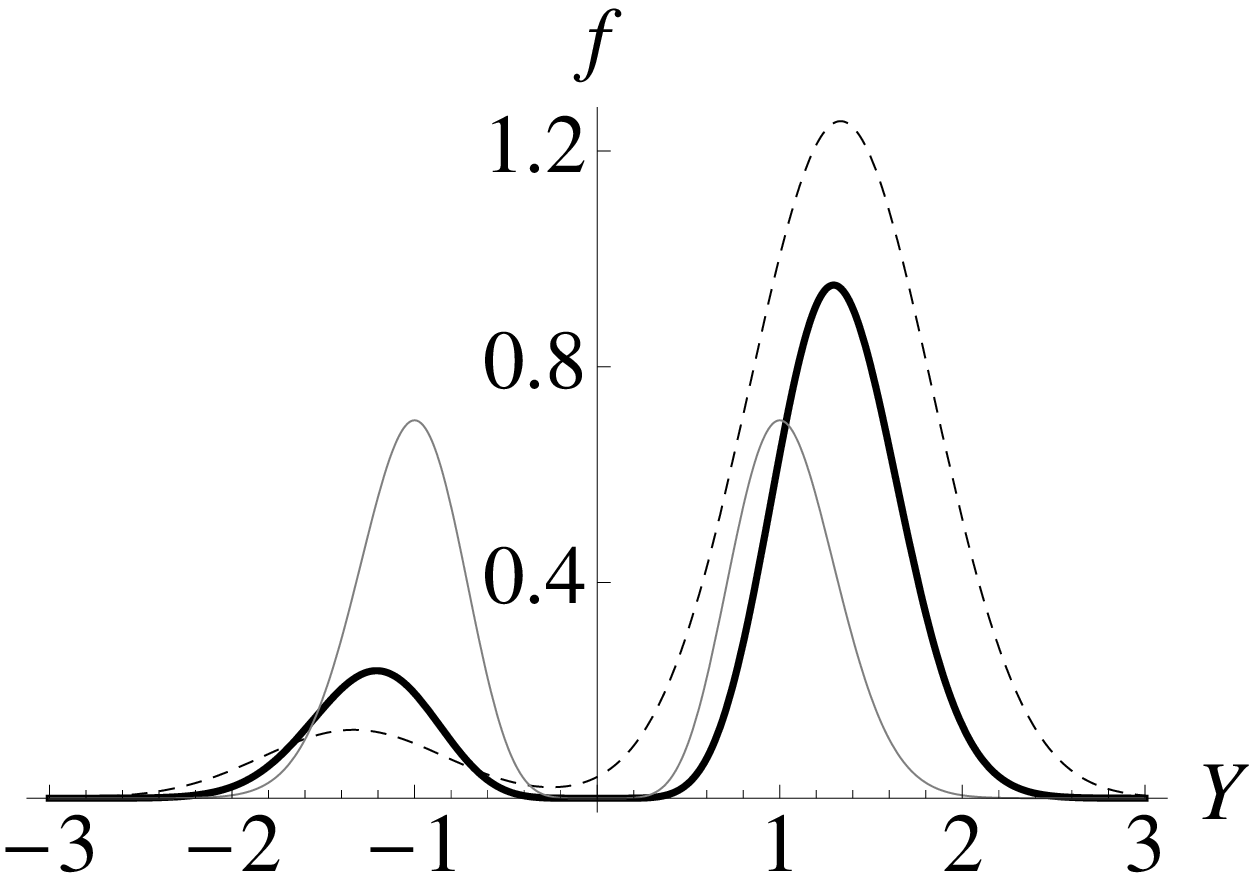}}\                 
  \subfloat[$t=10$]{\includegraphics[width=0.3\textwidth]{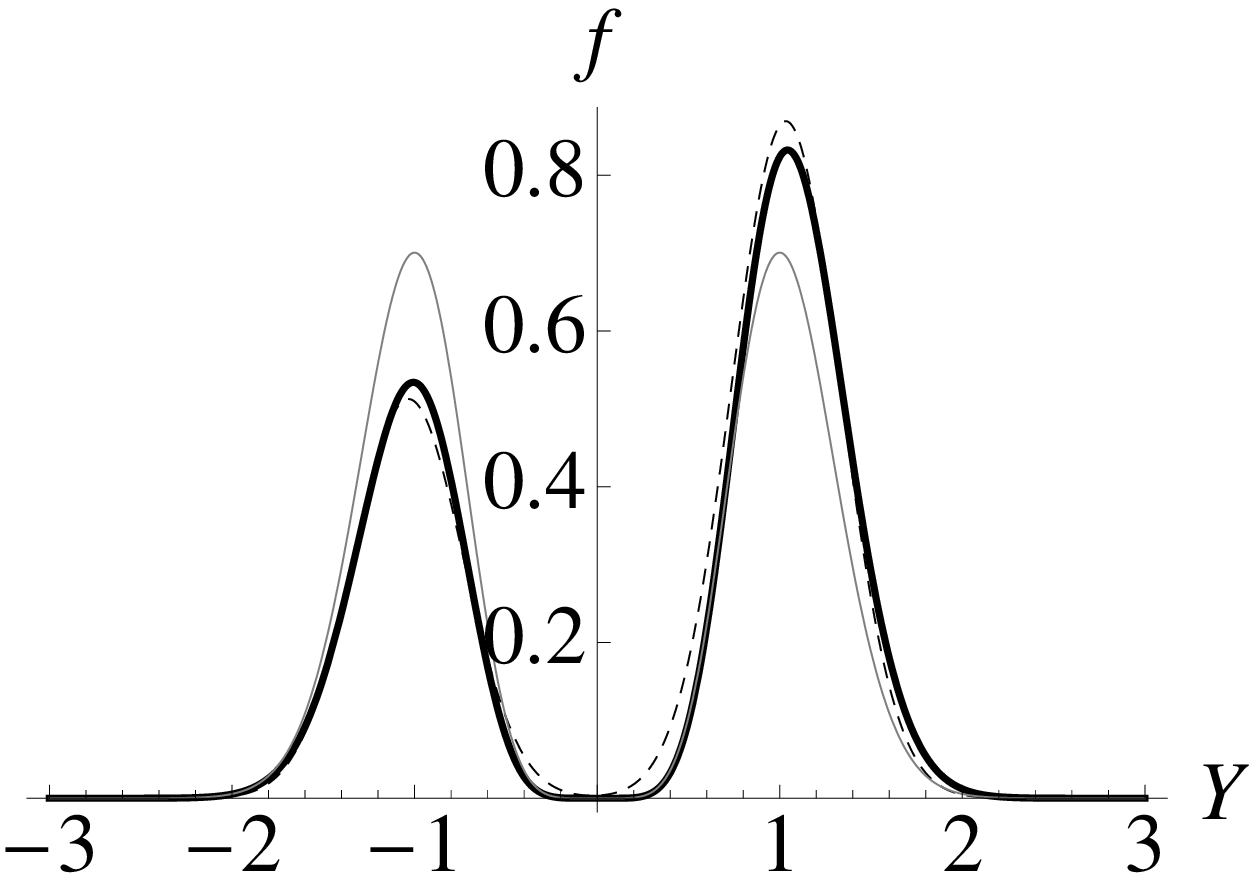}}\\
  \subfloat[$t=100$]{\includegraphics[width=0.3\textwidth]{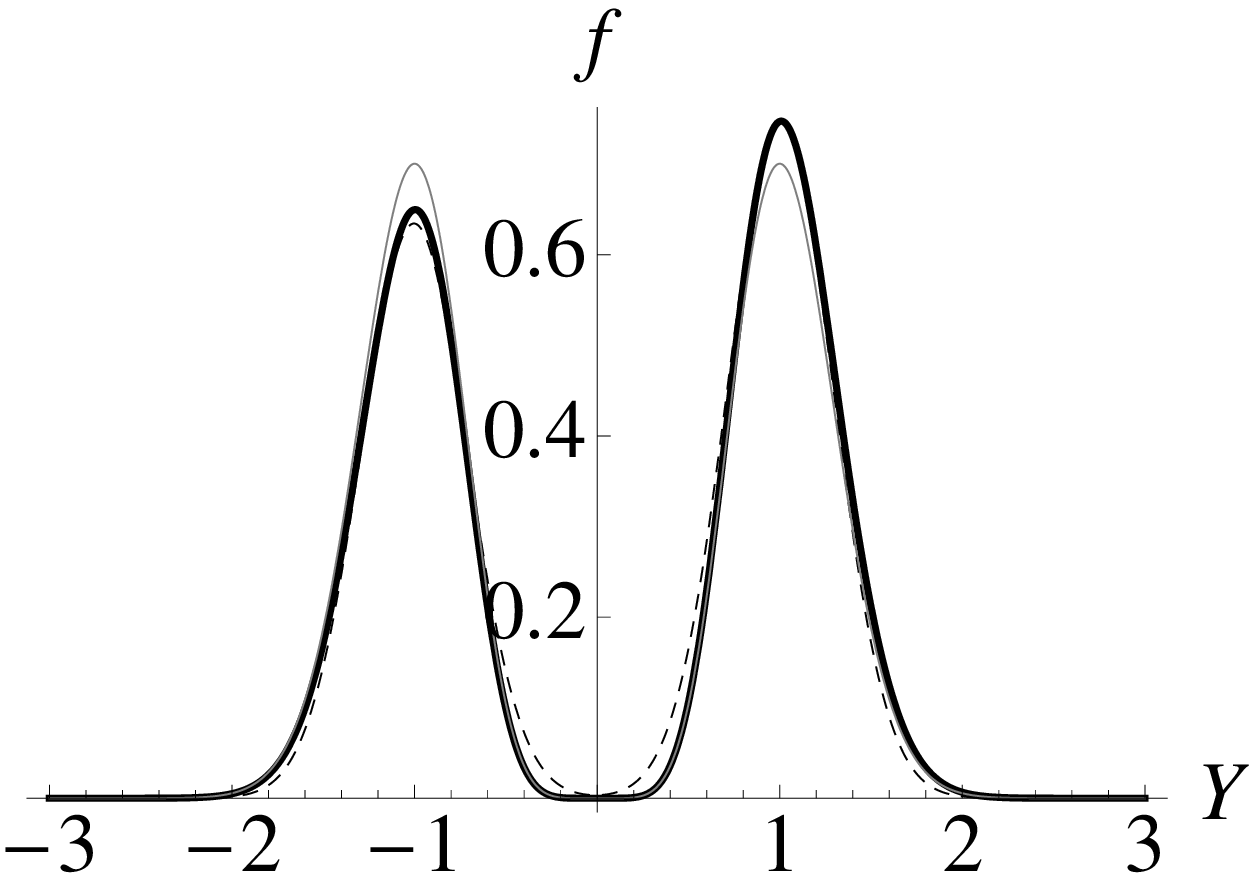}}\
  \subfloat[$t=1000$]{\includegraphics[width=0.3\textwidth]{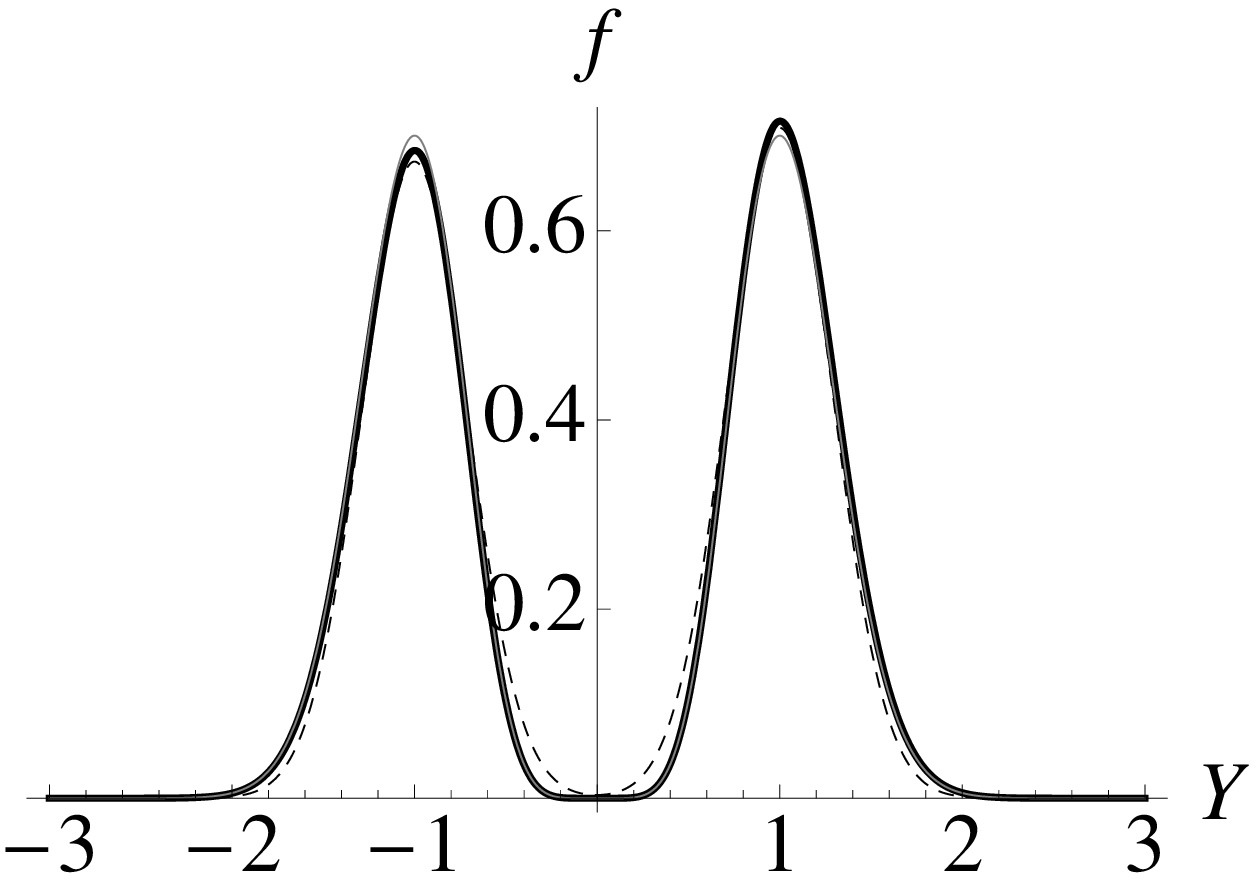}}\\
  \caption{Steady-state distribution $\rme^{-\beta F_{B_1}(Y)}/z_\beta$ (gray line), Gaussian approximation $\tilde G_\beta(Y)$ (dashed line), and scaled distribution $f(t,Y)$ (black line) of the one-dimensional Dunkl process with $\mu(x)=\delta(x-2)$ and $\beta=6$ for varying $t$. As $t\to\infty$, both $f(t,Y)$ and $\tilde G_\beta(Y)$ approach $\rme^{-\beta F_{B_1}(Y)}/z_\beta$.}
  \label{coldsteady}
\end{figure}

Theorem~\ref{TheoremFreezingLimit} also provides information about the convergence to the steady state for large $\beta$. If $\beta$ is taken as a large but fixed quantity and we let $t$ grow, we see that when $t\to\infty$ the approximated distribution $\tilde G_\beta(\bY)$ tends to $G_\beta(\bY)$. We also see that the convergence is actually faster for larger values of $\beta$, as the corrections from $G_\beta(\bY)$ are given by powers of $\beta t$. This is illustrated in Figure~\ref{coldsteady}, where we depict the time evolution of $f(t,Y)$ and $\tilde G_\beta (Y)$ for a one-dimensional Dunkl process with initial distribution $\mu=\delta(x-2)$ at $\beta=6$. We can observe that at $t=10$, $\tilde G_\beta (Y)$ already provides a good approximation of the shape of $f(t,Y)$. We can also observe that both $\tilde G_\beta(\bY)$ and $f(t,Y)$ have peaks that are located as $Y=\pm 1$ and their widths are close to those shown by the steady-state distribution $\rme^{-\beta F_{B_1}(Y)}/z_\beta$, meaning that the relaxation due to the drift mechanism is almost complete. Finally, we see that the relaxation due to the exchange mechanism takes a longer amount of time to complete. Indeed, when $t=100$ the only feature of $f(t,Y)$ that still differs significantly from the steady-state distribution is the height of the probability peaks. In fact, for the case of Figure~\ref{coldsteady}, we require a time of about $t=1000$ in order to have peaks which are equal in height to within 5\%.


\section{Proof of theorems and lemmas}\label{proofs}

In this section, we give proofs of our main results. First, we focus on the approach to the steady state, while the strong-coupling asymptotics is treated in the second part.

\subsection{Proofs of Theorem~\ref{steadystatetheorem} and Lemma~\ref{vbetaonlinearfns}}\label{preparationssteady}

We begin with the results that correspond to the approach to the steady-state, $t\to\infty$. We give the proof of Lemma~\ref{vbetaonlinearfns} first, followed by the proof of Theorem~\ref{steadystatetheorem}. Our proof of Lemma~\ref{vbetaonlinearfns} is based on the procedure outlined in part (5) of Examples~7.1 of Ref.~\onlinecite{roslervoit98}, and extends it to give the effect of the intertwining operator on linear functions in an arbitrary root system $R$.

\begin{proof}[Proof of Lemma~\ref{vbetaonlinearfns}]
Because $V_\beta$ is linear, there exists a real symmetric matrix $\bM_\beta$ such that
\begin{equation}\label{linearintertwiningoperatormatrix}
V_\beta \bx\cdot\by=\bx^T\bM_\beta\by.
\end{equation}
Inserting this relationship in \eqref{intertwiningrelation} with $f(\bx)=\bx\cdot\by$, we obtain
\begin{equation}\label{linearintertwiningrelation1}
y_i=T_i(\bx^T\bM_\beta\by)=[\bM_\beta\by]_i+\frac{\beta}{2}\sum_{\alpha\in R_+}\alpha_i\kappa(\balpha)\frac{(1-\sigma_{\balpha})\bx^T\bM_\beta\by}{\balpha\cdot\bx}.
\end{equation}
At the same time, the difference term can be found to be
\begin{equation}
(1-\sigma_{\balpha})\bx^T\bM_\beta\by=2\frac{\balpha\cdot\bx}{\alpha^2}\balpha^T\bM_\beta\by.
\end{equation}
As the solution of this relation, we obtain
\begin{equation}\label{equationformbeta}
\bM_\beta=\Big(\bI+\beta\sum_{\alpha\in R_+}\kappa(\balpha)\frac{\balpha\balpha^T}{\alpha^2}\Big)^{-1}.
\end{equation}
To calculate $\bM_\beta$, we separate $\bx$ into $\bx_\parallel$ and $\bx_\perp$. For $\bx_\perp$ we have
\begin{equation}
\Big(\bI+\beta\sum_{\alpha\in R_+}\kappa(\balpha)\frac{\balpha\balpha^T}{\alpha^2}\Big)\bx_\perp=\bx_\perp,
\end{equation}
and thus, within the space that is orthogonal to the linear envelope of $R$, $\bM_\beta$ behaves like the identity matrix. For $\bx_\parallel$, i.e., the space spanned by $R$, we rewrite the sum on the r.h.s. of \eqref{equationformbeta} as follows: denote by $n_R$ the number of independent multiplicities for $R$, denote the multiplicities themselves by $\{\kappa_i\}_{i=1}^{n_R}$, and choose roots $\{\bxi_i\}_{i=1}^{n_R}$ such that $\kappa(\bxi_i)=\kappa_i$. Also, define the set $W\bxi_i=\{\rho\bxi_i : \rho\in W\}$. Then, the sum over $R_+$ can be rewritten as
\begin{equation}
\sum_{\balpha\in R_+}\kappa(\balpha)\frac{\balpha\balpha^T}{\alpha^2}=\sum_{i=1}^{n_R}\frac{\kappa_i}{|\bxi_i|^2}\frac{|R_+\cap W\bxi_i|}{|W|}\sum_{\rho\in W}(\rho\bxi_i)(\rho\bxi_i)^T,\label{ProjectorSumUnsolved}
\end{equation}
where the ratio $|R_+\cap W\bxi_i|/|W|$ is included to account for multiple counting on the sum over $\rho$. Because each of the elements of $W$ has a faithful representation in terms of a matrix of size $d_R$, we find that the $jl$th component of the sum is given by
\begin{eqnarray}
\sum_{\rho\in W}[(\rho\bxi_i)(\rho\bxi_i)^T]_{jl}&=&\sum_{n,n^\prime=1}^{d_R}[\bxi_i]_n[\bxi_i]_{n^\prime}\sum_{\rho\in W}[\rho]_{jn}[\rho]_{n^\prime l}\nonumber\\
&=&\frac{|W|}{d_R}\sum_{n,n^\prime=1}^{d_R}[\bxi_i]_n[\bxi_i]_{n^\prime}\delta_{nn^\prime}\delta_{jl}=\frac{|W|}{d_R}|\bxi_i|^2\delta_{jl}.
\end{eqnarray}
Schur's orthogonality relations \cite{sternberg94} allow us to calculate the sum over $\rho$ and obtain the second equality above. Therefore, denoting by $\bI_R$ the identity matrix corresponding to the space spanned by $R$, we obtain
\begin{equation}\label{differencetermonlinearfunctions}
\sum_{\balpha\in R_+}\kappa(\balpha)\frac{\balpha\balpha^T}{\alpha^2}=\sum_{i=1}^{n_R}\kappa_i\frac{|R_+\cap W\bxi_i|}{d_R}\bI_R=\frac{\gamma}{d_R}\bI_R.
\end{equation}
Combining the above results, we have 
\begin{equation}
\Big(\bI+\beta\sum_{\alpha\in R_+}\kappa(\balpha)\frac{\balpha\balpha^T}{\alpha^2}\Big)\bx=\bx_\perp+\Big(1+\beta\frac{\gamma}{d_R}\Big)\bx_\parallel.
\end{equation}
Consequently, the action of $\bM_\beta$ on $\bx$ is found to be
\begin{equation}
\bM_\beta\bx=\bx_\perp+\frac{\bx_\parallel}{1+\beta\gamma/d_R}.
\end{equation}
This last expression, combined with \eqref{linearintertwiningoperatormatrix} completes the proof.
\end{proof}

Having proved Lemma~\ref{vbetaonlinearfns}, we continue with the proof of Theorem~\ref{steadystatetheorem}. For the statements that follow, we recall an important property of the intertwining operator which is a consequence of a theorem by R{\"o}sler (Theorem~1.2 in Ref.~\onlinecite{rosler99}). For any analytical function $f(\bx)$ within the $N$-dimensional ball of radius $|\bx|$, one has the following bound,
\begin{equation}\label{roslersbound}
|V_\beta f(\bx)|\leq \sup_{\by\in\co(W\bx)}|f(\by)|,
\end{equation}
where $\co(W\bx)$ denotes the convex hull of the set $W\bx=\{\bz:{}^\exists \rho\in W, \bz=\rho \bx\}$. In particular, the Dunkl kernel is bounded by
\begin{equation}\label{boundsfordunklkernel}
\rme^{-x y}\leq V_\beta \rme^{\bx\cdot\by}\leq \rme^{x y}.
\end{equation}

\begin{proof}[Proof of Theorem~\ref{steadystatetheorem}] Consider the initial distribution $\mu(\bx)=\delta^{(N)}(\bx-\bx_0)$ with $\bx_0\in\spn (R)$. The corresponding distribution is
\begin{equation}
f(t,\bY)=\frac{\rme^{-\beta F_R(\bY)}}{z_\beta}\rme^{-x_0^2/2t}V_\beta \rme^{\sqrt{\beta/t}\bx_0\cdot\bY}.
\end{equation}
The objective is to find out how the time-dependent expectation $\langle \phi\rangle_{t,\bx_0}$ converges to $\langle \phi\rangle$ as $t$ grows. To evaluate the expectation, we divide the integral over $\bY$ into two regions: $Y<r\sqrt{\gamma}$ and $Y\geq r\sqrt{\gamma}$. The parameter $r(\delta)=r>1$ is obtained using \eqref{tuningr} by choosing the value of $\delta$ so that the integral over $Y<r\sqrt{\gamma}$ covers all the interesting features of the steady-state distribution. The inner part of the integral can be written as
\begin{eqnarray}
\mathcal{I}_{\rmi}&:=&\int_{Y<r\sqrt{\gamma}}\phi(\bY)\frac{\rme^{-\beta Y^2/2}}{z_\beta}w_\beta(\bY)\rme^{-x_0^2/2t}V_\beta\rme^{\sqrt{\beta/t}\bx_0\cdot\bY}\ud \bY\nonumber\\
&\approx&\int_{Y<r\sqrt{\gamma}}\phi(\bY)\frac{\rme^{-\beta Y^2/2}}{z_\beta}w_\beta(\bY)\Big(1+\sqrt{\frac{\beta}{t}}\frac{\bx_0\cdot\bY}{1+\beta\gamma/d_R}\Big)\ud \bY\nonumber\\
&=&\int_{Y<r\sqrt{\gamma}}\phi(\bY)\frac{\rme^{-\beta Y^2/2}}{z_\beta}w_\beta(\bY)\ud \bY\Big(1+O\Big[\sqrt{\frac{\beta \gamma r^2 x_0^2}{t(1+\beta\gamma/d_R)^2}}\Big]\Big).
\end{eqnarray}
For the second line, we used Lemma~\ref{vbetaonlinearfns} and we assumed that $x_0^2/2t\ll \sqrt{\beta / t} \bx_0\cdot\bY\ll 1$ to make the approximation
\begin{equation}
\rme^{-x_0^2/2t}V_\beta \rme^{\sqrt{\beta / t}\bx_0\cdot\bY}\approx 1+\sqrt{\frac{\beta}{t}}\frac{\bx_0\cdot\bY}{1+\beta\gamma/d_R}. 
\end{equation}
This requires the condition $t\gg  x_0^2 \max(\frac{1}{2},\beta\gamma r^2)$. The outer part of the integral,
\begin{equation}
\mathcal{I}_{\text{o}}:=\int_{Y\geq r\sqrt{\gamma}}\phi(\bY)\frac{\rme^{-\beta Y^2/2}}{z_\beta}w_\beta(\bY)\rme^{-x_0^2/2t}V_\beta\rme^{\sqrt{\beta/t}\bx_0\cdot\bY}\ud \bY,
\end{equation}
can be estimated using \eqref{boundsfordunklkernel}:
\begin{equation}
\int_{Y\geq r\sqrt{\gamma}}\phi(\bY)\frac{\rme^{-\beta (Y+x_0/\sqrt{\beta t})^2/2}}{z_\beta}w_\beta(\bY)\ud \bY\leq\mathcal{I}_{\text{o}}\leq \int_{Y\geq r\sqrt{\gamma}}\phi(\bY)\frac{\rme^{-\beta (Y-x_0/\sqrt{\beta t})^2/2}}{z_\beta}w_\beta(\bY)\ud \bY.
\end{equation}
In this inequality, we have assumed that $\phi(\bY)$ is positive (if there are regions where $\phi(\bY)$ is negative, $\mathcal{I}_\text{o}$ can be divided into the regions where $\phi(\bY)$ is positive and the regions where it is negative; in the negative regions, the direction of the inequalities is reversed, and the rest of the argument is unchanged.) We can neglect the effect of $x_0/\sqrt{\beta t}$ by assuming that $t\gg x_0^2/\beta \gamma r^2$ and obtain 
\begin{equation}\label{approximationtailintegral}
\mathcal{I}_{\text{o}}\approx\int_{Y\geq r\sqrt{\gamma}}\phi(\bY)\frac{\rme^{-\beta Y^2/2}}{z_\beta}w_\beta(\bY)\ud \bY.
\end{equation}
Because we can choose $r(\delta)$ large enough (i.e., $\delta$ small enough) to let the inner integral $\mathcal{I}_\rmi$ account for the most significant contribution, we can assume that the approximation error made by neglecting the term $x_0/\sqrt{\beta t}$ in the outer integral is dominated by the correction obtained for the inner integral. Therefore, we write
\begin{equation}
\langle\phi\rangle_{t,\bx_0}=\mathcal{I}_\rmi +\mathcal{I}_{\text{o}}=\langle\phi\rangle\Big[1+O\Big(\sqrt{\frac{\beta \gamma r(\delta)^2 x_0^2}{t(1+\beta\gamma/d_R)^2}}\Big)\Big],
\end{equation}
provided $t\gg x_0^2\max(1/\beta \gamma r(\delta)^2,\beta \gamma r(\delta)^2)$.
\end{proof}

\subsection{Proofs of Theorem~\ref{TheoremFreezingLimit} and Lemma~\ref{FreezingLimitDunklKernelFullRank}}\label{proofsfreezing}

As before, we give the proof of Lemma~\ref{FreezingLimitDunklKernelFullRank} followed by the proof of Theorem~\ref{TheoremFreezingLimit}. However, the proof of Lemma~\ref{FreezingLimitDunklKernelFullRank} requires several other lemmas which we prove first. In particular, we must guarantee the convergence of the limit
\begin{equation}
V_\infty f(\bx):=\lim_{\beta\to\infty}V_\beta f(\bx).
\end{equation}
It has been shown that the action of the intertwining operator on homogeneous polynomials $p(\bx)$ of degree $n$ is given explicitly by\cite{deleavaldemniyoussfi15}
\begin{equation}\label{explicitVbeta}
V_\beta p(\bx)=\sum_{\{g_i\in W\}_{i=1}^n}C(g_1,\ldots,g_n)(g_1\bx\cdot\bnabla)\cdots(g_n\bx\cdot\bnabla)p(\bx),
\end{equation}
where the coefficient $C(g_1,\ldots,g_n)$ is given by
\begin{equation}
C(g_1,\ldots,g_n):=C_n(g_n)C_{n-1}(g_n^{-1}g_{n-1})\cdots C_1(g_2^{-1}g_1),
\end{equation}
each of the factors $C_n(g)$ is given by
\begin{equation}
C_n(g):=\sum_{m=0}^\infty \Big(\frac{\beta}{2}\Big)^m\frac{c_m(g)}{(n+\beta\gamma/2)^{m+1}},
\end{equation}
and the functions $c_m(g)$ are defined by
\begin{equation}
c_m(g):=\sum_{\substack{(\balpha_1,\ldots,\balpha_m)\in R_+^m:\\\sigma_{\balpha_1}\cdots\sigma_{\balpha_m}=g}}\ \prod_{j=1}^m \kappa(\balpha_j)
\end{equation}
for $m\geq 1$ and by $c_0(g)=\delta_{g,1}$ for $m=0$. Note that only the factors $C_n(g)$ depend on $\beta$, and in particular $C_n(g)\sim 1/\beta$, so the sharpest decay for a polynomial of degree $n$ is given by
\begin{equation}\label{betaasymptotics}
V_\beta p(\bx) \sim \beta^{-n},
\end{equation}
meaning that for any ($\beta$-independent) analytical function, $V_\infty f(\bx)=f(\bzero)$. In particular, the sharpest decay of $V_\beta p(\sqrt{\beta}\boldsymbol{x})$ is given by
\begin{equation}\label{scaledbetaasymptotics}
V_\beta p(\sqrt{\beta}\bx)\sim \beta^{-n/2},
\end{equation}
so $V_\beta p(\sqrt{\beta}\boldsymbol{x})$ converges at infinite $\beta$. However, we will see that $V_\beta \exp(\sqrt{\beta}\bx\cdot\by)$ has a non-trivial limit as $\beta\to\infty$. We now prove a statement which will be useful to assert the $W$-invariance of $V_\infty f(\bx)$.
\begin{lemma}\label{nulldifference}
An analytical function $f(\bx)$ is $W$-invariant if and only if it satisfies the equation
\begin{equation}\label{reflectioneigenvalueequation}
\frac{1}{\gamma}\sum_{\balpha\in R_+}\kappa(\balpha)\sigma_{\balpha}f(\bx)=f(\bx).
\end{equation}
\end{lemma}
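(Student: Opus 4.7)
The forward implication is immediate: if $f$ is $W$-invariant then $\sigma_{\balpha}f=f$ for every $\balpha\in R_+$, so the left-hand side of \eqref{reflectioneigenvalueequation} collapses to $\gamma^{-1}\bigl(\sum_{\balpha\in R_+}\kappa(\balpha)\bigr)f=f$. The content is the converse, and my plan there is to reduce the pointwise identity to a finite-dimensional eigenvalue problem, orbit by orbit, and then exploit that the averaging operator is a convex combination of involutions with strictly positive coefficients summing to unity.

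Concretely, I would fix an arbitrary $\bx_0\in\RR^N$ and consider its (finite) orbit $W\bx_0$. Since each $\sigma_{\balpha}$ maps $W\bx_0$ bijectively to itself, \eqref{reflectioneigenvalueequation} restricted to this orbit is equivalent to a linear equation $Mv=v$ on $\CCC^{W\bx_0}$, where $v_{\bz}:=f(\bz)$ and $M$ is the doubly-stochastic matrix
\begin{equation*}
M_{\bz,\bz'}=\frac{1}{\gamma}\sum_{\substack{\balpha\in R_+\\\sigma_{\balpha}\bz=\bz'}}\kappa(\balpha),\qquad \bz,\bz'\in W\bx_0.
\end{equation*}
Equipping $\CCC^{W\bx_0}$ with the standard Hermitian inner product, each $\sigma_{\balpha}$ acts as a permutation of coordinates, hence as a unitary; consequently $M=\sum_{\balpha\in R_+}(\kappa(\balpha)/\gamma)\sigma_{\balpha}$ is a convex combination of unitaries with strictly positive weights that sum to one.

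The crux is then the equality case of the triangle inequality: for $v\neq 0$,
\begin{equation*}
\|v\|=\|Mv\|\leq\sum_{\balpha\in R_+}\frac{\kappa(\balpha)}{\gamma}\|\sigma_{\balpha}v\|=\|v\|,
\end{equation*}
so every intermediate inequality is an equality. Combined with $\|\sigma_{\balpha}v\|=\|v\|$ and strict positivity of all coefficients, this forces the vectors $\{\sigma_{\balpha}v\}_{\balpha\in R_+}$ to coincide with a single vector $w$; substituting back into $Mv=v$ yields $w=v$, i.e., $\sigma_{\balpha}v=v$ for every $\balpha\in R_+$. Since $W$ is generated by its reflections, this implies $v$ is constant on $W\bx_0$, so $f(\rho\bx_0)=f(\bx_0)$ for all $\rho\in W$. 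Because $\bx_0$ was arbitrary, $f$ is $W$-invariant.

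The one delicate point I anticipate is the passage from ``equality in the triangle inequality'' to the pointwise conclusion $\sigma_{\balpha}v=v$ for each individual $\balpha$: here I rely critically on \emph{strict} positivity of $\kappa(\balpha)/\gamma$ together with the common norm $\|\sigma_{\balpha}v\|=\|v\|$, so that no reflection can be silently dropped from the average and none of the $\sigma_{\balpha}v$ can align in opposite directions. The remaining ingredients—restriction to a finite orbit, unitarity of each $\sigma_{\balpha}$ on $\CCC^{W\bx_0}$, and the fact that $\{\sigma_{\balpha}\}_{\balpha\in R_+}$ generates $W$—are routine bookkeeping and do not actually invoke analyticity of $f$.
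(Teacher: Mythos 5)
Your proof is correct, and it takes a genuinely different route from the paper's. The paper argues degree by degree on homogeneous polynomials: it introduces the averaging operator $A$ over reflections and the projector $B$ onto $W$-invariants, notes they commute and can be simultaneously diagonalized, and then rules out eigenvalue $1$ for non-$W$-invariant eigenfunctions via an estimate over subsets $S_\pm\subseteq W$ on which the translates $\nu p(\bx)$ have a definite sign. You instead localize to a single orbit $W\bx_0$, recast \eqref{reflectioneigenvalueequation} as the fixed-point equation $Mv=v$ for a convex combination $M=\sum_{\balpha\in R_+}(\kappa(\balpha)/\gamma)\sigma_{\balpha}$ of permutation unitaries on $\CCC^{W\bx_0}$, and invoke the equality case of the triangle inequality in a Hilbert space: since all $\sigma_{\balpha}v$ share the norm $\|v\|$ and all weights are strictly positive, equality forces $\sigma_{\balpha}v=v$ for every $\balpha\in R_+$, whence $v$ is constant on the orbit because the reflections generate $W$. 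The delicate step you flag — passing from equality to $\sigma_{\balpha}v=v$ — is sound (equality in $\|\sum_i c_iw_i\|\le\sum_i c_i\|w_i\|$ with $c_i>0$ forces all nonzero $w_i$ to be positive multiples of a common vector, and the common norm then makes them equal), and like the paper's argument it uses essentially the standing assumption $\kappa(\balpha)>0$. What your approach buys is a cleaner and strictly more general statement: it is pointwise, needs no regularity of $f$ whatsoever (no analyticity, no reduction to homogeneous polynomials), and replaces the paper's sign-set bookkeeping with a standard strict-convexity argument; what the paper's approach buys is spectral information about $A$ on each space of homogeneous polynomials (namely $|\lambda|\le 1$ for all eigenvalues), which is mildly more than the lemma itself asserts.
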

\begin{proof}
It is clear that if $f(\bx)$ is $W$-invariant, then \eqref{reflectioneigenvalueequation} is satisfied. For the converse, we only need to regard $f(\bx)$ as a homogeneous polynomial of degree $n$. Define the operator
\begin{equation}
Af(\bx):=\frac{1}{2\gamma}\sum_{\balpha\in R}\kappa(\balpha)\sigma_{\balpha}f(\bx)=\frac{1}{\gamma}\sum_{\balpha\in R_+}\kappa(\balpha)\sigma_{\balpha}f(\bx).
\end{equation}
The objective, then, is to prove that the polynomial eigenfunctions of $A$ with eigenvalue 1 are $W$-invariant. It is easy to show that $\rho A=A \rho$ for all $\rho\in W$, and consequently $A$ commutes with the operator
\begin{equation}
Bf(\bx):=\frac{1}{|W|}\sum_{\rho\in W}\rho f(\bx).
\end{equation}
This operator is a projector because $B^2f(\bx)=Bf(\bx)$, and therefore it has two eigenvalues: $0$ and $1$. Because $A$ and $B$ commute, there exists a basis on the space of homogeneous polynomials of degree $n$ such that both operators are diagonalized. Let $p(\bx)$ be an element of that basis. Then, we have either $Bp(\bx)=p(\bx)$ or $Bp(\bx)=0$. The first case indicates that $p(\bx)$ is $W$-invariant, and consequently $Ap(\bx)=p(\bx)$. Therefore, we only need to prove that the non-$W$-invariant eigenfunctions of $A$ (those for which $Bp(\bx)=0$) have eigenvalues different from $1$. In that case, there exists a set $S_+\subseteq W$ for which $\nu p(\bx)\geq 0$ for all $\nu\in S_+$ and $\nu p(\bx)< 0$ for all $\nu\in W\setminus S_+=:S_-$ such that
\begin{equation}
\sum_{\nu\in S_+}\nu p(\bx)=-\sum_{\nu\in S_-}\nu p(\bx).
\end{equation}
Then, we set $Ap(\bx)=\lambda p(\bx)$ and we have
\begin{equation}
\frac{1}{2\gamma}\sum_{\alpha\in R}\kappa(\balpha)\sigma_{\balpha}\sum_{\nu\in S_+}\nu p(\bx)=\frac{1}{2\gamma}\sum_{\nu\in S_+}\nu\sum_{\alpha\in R}\kappa(\balpha)\sigma_{\balpha}p(\bx)=\lambda\sum_{\nu\in S_+}\nu p(\bx),
\end{equation}
from which we obtain
\begin{equation}
0\leq |\lambda|\sum_{\nu\in S_+}\nu p(\bx)=\Big|\frac{1}{2\gamma}\sum_{\alpha\in R}\kappa(\balpha)\sum_{\nu\in S_+}\sigma_{\balpha}\nu p(\bx)\Big|=\Big|\frac{1}{2\gamma}\sum_{\alpha\in R}\kappa(\balpha)\sum_{\nu\in \sigma_{\balpha}S_+}\nu p(\bx)\Big|,
\end{equation}
where we have used the substitution $\sigma_{\balpha}\nu\to\nu$ and $\sigma_{\balpha}S_+=\{\nu\in W : \sigma_{\balpha}\nu\in S_+\}$. Now, we note that the double sum on the right is bounded,
\begin{equation}
\Big|\frac{1}{2\gamma}\sum_{\alpha\in R}\kappa(\balpha)\sum_{\nu\in \sigma_{\balpha}S_+}\nu p(\bx)\Big|\leq\Big|\frac{1}{2\gamma}\sum_{\alpha\in R}\kappa(\balpha)\sum_{\nu\in S_+}\nu p(\bx)\Big|=\Big|\sum_{\nu\in S_+}\nu p(\bx)\Big|=\sum_{\nu\in S_+}\nu p(\bx),
\end{equation}
with equality when $\sigma_{\balpha}S_+=S_+$ for all $\balpha$. This is only possible in two cases. In the first case, $S_+=W$, and so $\nu p(\bx)=0$ for all $\nu$, a $W$-invariant function. In the second case, $S_+=\emptyset$, or $S_-=W$, leading to $\sum_{\nu\in W}\nu p(\bx)<0,$ a contradiction. Therefore, we can write
\begin{equation}
0\leq |\lambda|\sum_{\nu\in S_+}\nu p(\bx)\leq \sum_{\nu\in S_+}\nu p(\bx),
\end{equation}
and we conclude that $|\lambda|\leq 1$, with $|\lambda|=1$ only when $\nu p(\bx)=0$ for all $\nu\in W$. 
\end{proof} 

As a corollary, any function $f(\bx)$ is $W$-invariant if and only if $T_i f(\bx)=\frac{\partial}{\partial x_i}f(\bx)$ for all $i=1,\ldots,N$. However, we use the lemma to prove the following statement about $V_\infty f(\bx)$.

\begin{lemma}\label{winvariance}
Let $f(\bx)$ be an analytical function. Then the function $V_\infty f(\bx)$, if the limit converges, is $W$-invariant.
\end{lemma}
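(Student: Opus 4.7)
The plan is to apply Lemma~\ref{nulldifference} to $V_\infty f$, so it suffices to show that $V_\infty f$ satisfies the reflection-averaging equation \eqref{reflectioneigenvalueequation}. The bridge to that equation is the intertwining relation \eqref{intertwiningrelation} itself, which couples $V_\beta f$ to the reflections $\sigma_\balpha$ through the Dunkl operator, with the exchange part carrying an explicit factor of $\beta$.

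First I would rewrite \eqref{intertwiningrelation} using $k(\balpha)=\beta\kappa(\balpha)/2$ and solve for the difference piece, obtaining
\begin{equation*}
\sum_{\balpha\in R_+}\alpha_i\,\kappa(\balpha)\frac{(1-\sigma_\balpha)V_\beta f(\bx)}{\balpha\cdot\bx}=\frac{2}{\beta}\bigl(V_\beta\partial_i f(\bx)-\partial_i V_\beta f(\bx)\bigr)
\end{equation*}
for every $i=1,\ldots,N$. The next step is to pass to $\beta\to\infty$. Using R\"osler's bound \eqref{roslersbound} on the analytic function $\partial_i f$ one gets $|V_\beta\partial_i f(\bx)|$ uniformly bounded on any compact set; combined with the assumed convergence of $V_\beta f$ (uniformly on compact sets, so that Cauchy's derivative estimates give convergence of $\partial_i V_\beta f$ to $\partial_i V_\infty f$), the right-hand side is $O(1/\beta)$ and therefore vanishes in the limit. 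Thus for each $i$,
\begin{equation*}
\sum_{\balpha\in R_+}\alpha_i\,\kappa(\balpha)\frac{(1-\sigma_\balpha)V_\infty f(\bx)}{\balpha\cdot\bx}=0.
\end{equation*}

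The final step is the algebraic trick of multiplying the $i$-th identity by $x_i$ and summing over $i$. Since $\sum_i\alpha_i x_i=\balpha\cdot\bx$ cancels the denominator, one obtains the single scalar identity
\begin{equation*}
\sum_{\balpha\in R_+}\kappa(\balpha)(1-\sigma_\balpha)V_\infty f(\bx)=0,
\end{equation*}
which, after dividing by $\gamma=\sum_{\balpha\in R_+}\kappa(\balpha)$, is exactly the hypothesis of Lemma~\ref{nulldifference}. Hence $V_\infty f(\bx)$ is $W$-invariant.

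The main obstacle is the analytic justification of the limit: one must know that $V_\beta f$ converges in a manner that controls $\partial_i V_\beta f$, which is where an assumption of uniform convergence on compact sets (implicit in the statement via "if the limit converges") is needed. Once this regularity is granted, R\"osler's bound handles the boundedness of $V_\beta\partial_i f$ and the rest is the short algebraic manipulation outlined above.
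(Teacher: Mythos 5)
Your proposal is correct and follows essentially the same route as the paper: both isolate the difference part of the Dunkl operator from the intertwining relation \eqref{intertwiningrelation}, observe that it equals a $\beta^{-1}$ multiple of a bounded quantity so it vanishes as $\beta\to\infty$, contract with $x_i$ to cancel the denominators $\balpha\cdot\bx$, and invoke Lemma~\ref{nulldifference}. The only cosmetic differences are the order of the contraction versus the limit and that you justify the boundedness via R\"osler's bound and Cauchy estimates where the paper cites the polynomial asymptotics \eqref{betaasymptotics}.
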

\begin{proof}
Consider the expression $\sum_i x_i T_i V_\beta f(\bx)$. After using \eqref{intertwiningrelation}, we obtain
\begin{equation}
\frac{1}{\beta}\sum_{i=1}^Nx_i\Big[V_\beta\frac{\partial}{\partial x_i}f(\bx)-\frac{\partial}{\partial x_i}V_\beta f(\bx)\Big]=\frac{1}{2}\sum_{\balpha\in R_+}\kappa(\balpha)\Big[V_\beta f(\bx)-V_\beta f(\sigma_{\balpha}\bx)\Big].
\end{equation}
Due to the asymptotics given in~\eqref{betaasymptotics}, if $V_\infty f(\bx)$ converges, so does $V_\infty\frac{\partial}{\partial x_i}f(\bx)$ because $f(\bx)$ is analytic, and can therefore be written as a sum of homogeneous polynomials. Consequently, as $\beta\to\infty$ the l.h.s. vanishes, and we obtain~\eqref{reflectioneigenvalueequation}. By Lemma~\ref{nulldifference}, it follows that $V_\infty f(\bx)$ is $W$-invariant. 
\end{proof}

We turn our attention to the limit $V_\infty \rme^{\bx\cdot\by}$. Because $\rme^{\bx\cdot\by}$ is an analytical function, $V_\infty \rme^{\bx\cdot\by}$ converges and it is a $W$-invariant function. Recall that the Dunkl kernel satisfies \eqref{dunklkerneleigenvalueequation}, but as $\beta$ tends to infinity, we will need a first-order operator which preserves $W$-invariance in order to calculate $V_\infty \rme^{\bx\cdot\by}$ explicitly. It is known that the Dunkl operators are $W$-equivariant\cite{dunklxu,rosler08}, so if $f(\bx)$ is $W$-invariant, then
\begin{equation}
\rho \Big[\sum_{i=1}^N(\xi_i T_i)\Big]f(\bx)=\sum_{i=1}^N[(\rho\bxi)_i T_i)]\rho f(\bx)=\sum_{i=1}^N[(\rho\bxi)_i T_i)]f(\bx)
\end{equation}
for $\bxi\in\RR^N$. If we want the operator $\sum_{i=1}^N\xi_i T_i$ to preserve the $W$-invariance of $f(\bx)$, we require $\rho \bxi=\bxi$ for all $\rho\in W$, meaning that $\bxi$ must be orthogonal to $\spn(R)$. Consequently, we can only have first order Dunkl operators which preserve $W$-invariance if $d_R<N$. 

On the other hand, if $d_R=N$, we can use the Dunkl Laplacian, which preserves $W$-invariance for any root system.\cite{dunklxu,rosler08} This means that we can use the equation
\begin{equation}\label{dunklkernelequationlaplacian}
\sum_{i=1}^N T_i^2 V_\beta \rme^{\bx\cdot\by}=y^2V_\beta\rme^{\bx\cdot\by}
\end{equation}
to calculate $V_\beta \rme^{\sqrt{\beta}\bx\cdot\by}$ as $\beta\to\infty$. With these facts in mind, we can prove the following.

\begin{lemma}\label{FreezingLimitDunklKernelNonFullRank}
For root systems with $d_R<N$, the limit $\beta\to\infty$ of the Dunkl kernel is given by
\begin{equation}\label{EquationFreezingLimitDunklKernelNonFullRank}
V_\infty\rme^{\bx\cdot\by}=\rme^{\bx_\perp\cdot\by_\perp}.
\end{equation}
\end{lemma}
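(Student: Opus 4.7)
My plan is to reduce the statement to a full-rank computation in $\spn(R)$, which can then be dispatched by the polynomial asymptotics \eqref{betaasymptotics}. First, I would work in coordinates whose first $d_R$ axes span $\spn(R)$ and whose remaining $N-d_R$ axes span its orthogonal complement. For any $j>d_R$, every root has vanishing $j$th component, so the difference part of $T_j$ disappears and $T_j$ collapses to $\partial/\partial x_j$. Applying the Dunkl-kernel eigenvalue equation \eqref{dunklkerneleigenvalueequation} with $i=j$ gives $\partial_j V_\beta\rme^{\bx\cdot\by}=y_j V_\beta\rme^{\bx\cdot\by}$; integrating and using $V_\beta[1]=1$ produces the factorization
\begin{equation*}
V_\beta \rme^{\bx\cdot\by}=\rme^{\bx_\perp\cdot\by_\perp}\,h_\beta(\bx_\parallel,\by).
\end{equation*}
Running the same argument on the $\by$ slot (using the symmetry $E_\beta(\bx,\by)=E_\beta(\by,\bx)$) further removes any $\by_\perp$ dependence, leaving $h_\beta=h_\beta(\bx_\parallel,\by_\parallel)$.

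Next, I would substitute this ansatz into the Dunkl Laplacian eigenvalue equation \eqref{dunklkernelequationlaplacian} and split the sum over $i$ between parallel and orthogonal directions. The orthogonal block acts as a plain Laplacian on $\rme^{\bx_\perp\cdot\by_\perp}$ and accounts for $y_\perp^2$. For the parallel block, the crucial point is that every reflection $\sigma_\balpha$ fixes $\bx_\perp$, so $\rme^{\bx_\perp\cdot\by_\perp}$ commutes through every $T_i$ with $i\leq d_R$; what remains is exactly the Dunkl Laplacian of $R$ viewed as a full-rank system in $\spn(R)$, applied to $h_\beta$ with eigenvalue $y_\parallel^2$. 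Combined with the normalization $h_\beta(\bx_\parallel,\bzero)=1$ and the degree-preserving property of $V_\beta$, this identifies $h_\beta$ with the full-rank Dunkl kernel, which I shall denote by $V_\beta^{(R)}\rme^{\bx_\parallel\cdot\by_\parallel}$.

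Finally, to send $\beta\to\infty$ in the series $h_\beta=\sum_{n\geq 0} V_\beta^{(R)}[(\bx_\parallel\cdot\by_\parallel)^n]/n!$, I would invoke the polynomial asymptotics \eqref{betaasymptotics}: each term with $n\geq 1$ is $O(\beta^{-n})$, so the pointwise limit annihilates every $n\geq 1$ contribution and leaves only the constant $1$. To exchange limit and summation, I would apply R\"osler's bound \eqref{roslersbound} monomial by monomial to obtain $|V_\beta^{(R)}[(\bx_\parallel\cdot\by_\parallel)^n]|\leq (|\bx_\parallel|\,|\by_\parallel|)^n$ (using that every vertex of $\co(W\bx_\parallel)$ has norm $|\bx_\parallel|$), which supplies a $\beta$-independent summable majorant; dominated convergence then gives $h_\beta\to 1$ and hence \eqref{EquationFreezingLimitDunklKernelNonFullRank}.

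The step I expect to require the most care is the middle one: identifying $h_\beta$ as a genuine Dunkl kernel of the reduced full-rank system, rather than as one arbitrary solution of the restricted Dunkl Laplacian eigenvalue equation. Making this rigorous means tracking not just the PDE but also the normalization $h_\beta(\cdot,\bzero)=1$ and the polynomial-degree structure inherited from $V_\beta$, so that $h_\beta$ is forced to coincide with $V_\beta^{(R)}\rme^{\bx_\parallel\cdot\by_\parallel}$. Once this identification is locked in, the asymptotic argument is essentially a one-line power-series estimate.
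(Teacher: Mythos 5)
Your proposal is correct in outline but follows a genuinely different route from the paper's. The paper first establishes that $V_\infty f(\bx)$ is $W$-invariant whenever the limit exists (Lemma~\ref{winvariance}, resting on Lemma~\ref{nulldifference}), and then argues that the only eigenvalue equations \eqref{dunklkerneleigenvalueequation} surviving the limit are those in directions orthogonal to $\spn(R)$, where $T_{\bphi_i}$ degenerates to $\bphi_i\cdot\bnabla$; the first-order equations $\bphi_i\cdot\bnabla g=(\bphi_i\cdot\by)g$, the vanishing of $T_{\bxi}g$ for $\bxi\in\spn(R)$, and the normalization then force $g=\rme^{\bx_\perp\cdot\by_\perp}$. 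You instead factorize the kernel at finite $\beta$ and push the entire problem into the full-rank sub-system, which buys you two things: you bypass the $W$-invariance machinery altogether, and the dominated-convergence step via R\"osler's bound \eqref{roslersbound} makes explicit an interchange of limit and sum that the paper leaves implicit. Two caveats. First, the identification of $h_\beta$ with the restricted full-rank kernel is cleaner through the first-order system than through the Laplacian: the product $\rme^{\bx_\perp\cdot\by_\perp}\,V_\beta^{(R)}\rme^{\bx_\parallel\cdot\by_\parallel}$ satisfies $T_iE=y_iE$ for every $i$ together with $E(\bzero,\by)=1$, and the Dunkl kernel is the unique such solution; the second-order equation \eqref{dunklkernelequationlaplacian} does not determine the solution uniquely, which is exactly the difficulty you flag, so it is better avoided. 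Second, the rate $O(\beta^{-n})$ you quote from \eqref{betaasymptotics} is only the sharpest possible decay; the paper's own induction in the proof of Lemma~\ref{FreezingLimitDunklKernelFullRank} shows that the actual decay of $V_\beta(\bx\cdot\by)^n$ in the full-rank case is $\beta^{-\lceil n/2\rceil}$. Since only the vanishing of each $n\geq1$ term is needed, your conclusion stands, but the stated rate should not be relied upon.
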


\begin{proof}
For this derivation, denote $V_\infty\rme^{\bx\cdot\by}$ by $g(\bx,\by)$. By Lemma~\ref{winvariance}, the function $g(\bx,\by)$ must be $W$-invariant. At the same time, \eqref{dunklkerneleigenvalueequation} must hold at finite $\beta$. However, the operator $T_{\bxi}=\sum_{i=1}^N\xi_i T_i$ does not preserve $W$-invariance unless $\bxi$ is orthogonal to $\spn(R)$. Therefore, the equation
\begin{equation}
T_{\bxi} V_\beta \rme^{\bx\cdot\by}=\bxi\cdot\by V_\beta \rme^{\bx\cdot\by}
\end{equation}
only holds in the limit $\beta\to\infty$ when $\bxi$ is orthogonal to $R$, otherwise it must be zero because $W$-invariant and non-$W$-invariant functions cannot be identically equal.

Suppose that the space orthogonal to $R$ has an orthonormal basis denoted by $\{\bphi_i\}_{i=1}^{N-d_R}$. Then, for $1\leq i\leq N-d_R$, one has
\begin{equation}
T_{\bphi_i}=\bphi_i\cdot\bnabla+\frac{\beta}{2}\sum_{\balpha\in R_+}[\bphi_i\cdot\balpha] \kappa(\balpha) \frac{1-\sigma_{\balpha}}{\balpha\cdot\bx}=\bphi_i\cdot\bnabla,
\end{equation}
and when $\beta\to\infty$, 
\begin{equation}\label{DunklKernelFirstOrderDifferentialEquation}
\bphi_i\cdot\bnabla g(\bx,\by)=[\bphi_i\cdot\by] g(\bx,\by).
\end{equation}
Note that if $\bxi$ is not a linear combination of the $\{\bphi_i\}_{i=1}^{N-d_R}$, then $T_{\bxi} g(\bx,\by)=0$. Because $\bphi_i\cdot\by$ is the $i$th component of $\by$ in the space orthogonal to $R$, and \eqref{DunklKernelFirstOrderDifferentialEquation} holds for $1\leq i\leq N-d_R$, it follows that
the solution of \eqref{DunklKernelFirstOrderDifferentialEquation} is $g(\bx,\by)=\rme^{\bx_\perp\cdot\by_\perp}.$
\end{proof}

If $d_R=N$, it follows immediately from this result that $V_\infty \rme^{\bx\cdot\by}=1$. However, we are interested in the limit when $\beta\to\infty$ of $V_\beta \rme^{\sqrt{\beta}\bx\cdot\by}$. Note that the $W$-invariant part of the Dunkl kernel, known as the generalized Bessel function,
\begin{equation}
E_\beta^W(\bx,\by):=\frac{1}{|W|}\sum_{\rho\in W}V_\beta\exp(\rho\bx\cdot\by),
\end{equation}
decays more slowly with growing $\beta$ than the asymptotics given in~\eqref{betaasymptotics}.\cite{deleavaldemniyoussfi15} In fact, the $n$th term in the homogeneous polynomial expansion of $E_\beta^W(\bx,\by)$ is given by
\begin{equation}
E_{\beta,n}^W(\bx,\by)\propto\sum_{\{g_i\in W\}_{i=1}^n}C_{n-1}(g_n^{-1}g_{n-1})\cdots C_1(g_2^{-1}g_1)\prod_{j=1}^n(g_j\bx\cdot\by),
\end{equation}
with $E_{\beta,0}^W(\bx,\by)=1$, and because each factor of $C_{j-1}(g_j^{-1}g_{j-1})$ contributes a factor of $\beta^{-1}$ it follows that the maximum decay of $E_{\beta,n}^W(\bx,\by)$ is  $\beta^{-(n-1)}$. Note that the linear term vanishes because $\sum_{g\in W}g\bx\cdot\by=0$. Therefore, the constant and linear terms are independent of $\beta$ and $\bx$, and if we replace $\bx$ with $\sqrt{\beta}\bx$, for $n\geq 2$ we have a maximum decay of $\beta^{1-n/2}$ for the $n$th order term. This means that $V_\beta\exp(\sqrt{\beta}\bx\cdot\by)$ should converge to a second-degree polynomial at $\beta\to\infty$ if its maximum decay is its actual decay. However, as we will show below, the decay of each term in the expansion of $V_\beta\exp(\bx\cdot\by)$ is weaker, giving a non-trivial limit for the scaled Dunkl kernel $V_\beta\exp(\sqrt{\beta}\bx\cdot\by)$.


\begin{proof}[Proof of Lemma~\ref{FreezingLimitDunklKernelFullRank}]
We begin by deriving the decay with $\beta$ of each of the terms in the expansion
\begin{equation}\label{DunklKernelExpansion}
V_\beta \rme^{\bx\cdot\by}=\sum_{n=0}^\infty V_\beta\frac{(\bx\cdot\by)^n}{n!}.
\end{equation}
Recall that $V_\beta 1=1$. By Lemma~\ref{vbetaonlinearfns}, the first-order term is
\begin{equation}\label{DunklKernelLinearTerm}
V_\beta\bx\cdot\by=\frac{\bx\cdot\by}{1+\beta\gamma/N}\stackrel{\beta\text{ large}}{\approx}\frac{N\bx\cdot\by}{\beta\gamma}\sim\frac{1}{\beta}.
\end{equation}
By Lemma~\ref{winvariance}, the limit $\beta\to\infty$ eliminates the non-$W$-invariant part of $V_\beta\exp(\bx\cdot\by)$ faster than its $W$-invariant part. Consequently, the slowest decay for each of the terms in \eqref{DunklKernelExpansion} is obtained by using the Dunkl Laplacian, which relates higher-order terms with lower-order terms while conserving their $W$-invariance (or lack thereof).

In general, each term in the expansion~\eqref{DunklKernelExpansion} satisfies the equation
\begin{equation}\label{EquationSetupFrozenDunklKernelFullRank}
\frac{y^2}{\beta}V_\beta\frac{(\bx\cdot\by)^{n-2}}{(n-2)!}=\Big[\frac{1}{\beta}\Delta+\sum_{\balpha\in R_+}\kappa(\balpha)\Big(\frac{\balpha\cdot\bnabla}{\balpha\cdot\bx}-\frac{\alpha^2}{2}\frac{1-\sigma_{\balpha}}{(\balpha\cdot\bx)^2}\Big)\Big]V_\beta\frac{(\bx\cdot\by)^{n}}{n!}
\end{equation}
for $n>1$. We proceed using mathematical induction. Assume that
\begin{equation}\label{DunklKernelBetaAsymptotics}
V_\beta\frac{(\bx\cdot\by)^{2m}}{(2m)!}\sim\frac{1}{\beta^m}\quad\text{and}\quad V_\beta\frac{(\bx\cdot\by)^{2m+1}}{(2m+1)!}\sim\frac{1}{\beta^{m+1}},
\end{equation}
and note that these assumptions hold for $m=0$. Because spatial partial derivatives and $\sigma_{\balpha}$ do not have an effect on the $\beta$-dependence of $V_\beta (\bx\cdot\by)^n$, one may write
\begin{eqnarray}
\sum_{\balpha\in R_+}\kappa(\balpha)\Big(\frac{\balpha\cdot\bnabla}{\balpha\cdot\bx}-\frac{\alpha^2}{2}\frac{1-\sigma_{\balpha}}{(\balpha\cdot\bx)^2}\Big)V_\beta\frac{(\bx\cdot\by)^{n}}{n!}&=&\nonumber\\
\frac{1}{\beta}\Big[y^2V_\beta\frac{(\bx\cdot\by)^{n-2}}{(n-2)!}-\Delta V_\beta\frac{(\bx\cdot\by)^{n}}{n!}\Big]&\stackrel{\beta\text{ large}}{\sim}&
\begin{cases}
\frac{1}{\beta^{m+1}}&\ \text{for }n=2(m+1),\\
\frac{1}{\beta^{m+2}}&\ \text{for }n=2(m+1)+1.
\end{cases}
\end{eqnarray}
Here, we have used the fact that, after being deformed by $V_\beta$, $n$th degree polynomials decay faster than (or at least at the same rate as) $(n-2)$th degree polynomials with growing $\beta$, which is clear from~\eqref{betaasymptotics}. By induction, \eqref{DunklKernelBetaAsymptotics} holds for $m\geq 0$. Then, it follows that
\begin{equation}
V_\beta\frac{\beta^m(\bx\cdot\by)^{2m}}{(2m)!}
\end{equation}
converges to a non-zero, $W$-invariant polynomial as $\beta\to\infty$ and that 
\begin{equation}
V_\beta\frac{\beta^{m+1/2}(\bx\cdot\by)^{2m+1}}{(2m+1)!}\sim\frac{1}{\sqrt{\beta}}\stackrel{\beta\to\infty}{\longrightarrow}0.
\end{equation}
Define the limit of the scaled even terms of the expansion~\eqref{DunklKernelExpansion} by
\begin{equation}
L_m(\bx,\by):=\lim_{\beta\to\infty}V_\beta\frac{\beta^{m}(\bx\cdot\by)^{2m}}{(2m)!}.
\end{equation}
By Lemma~\ref{winvariance}, these functions are $W$-invariant. Multiplying \eqref{EquationSetupFrozenDunklKernelFullRank} by $\beta^{m}$ with $n=2m$ gives
\begin{equation}
y^2 V_\beta\frac{\beta^{m-1}(\bx\cdot\by)^{2(m-1)}}{(2(m-1))!}=\Big[\frac{1}{\beta}\Delta+\sum_{\balpha\in R_+}\kappa(\balpha)\Big(\frac{\balpha\cdot\bnabla}{\balpha\cdot\bx}-\frac{\alpha^2}{2}\frac{1-\sigma_{\balpha}}{(\balpha\cdot\bx)^2}\Big)\Big]V_\beta\frac{\beta^{m}(\bx\cdot\by)^{2m}}{(2m)!}.
\end{equation}
Taking the limit $\beta\to\infty$ gives
\begin{equation}\label{DifferentialEquationFrozenDunklKernelFullRankExpansion}
y^2L_{m-1}(\bx,\by)=\sum_{\balpha\in R_+}\kappa(\balpha)\frac{\balpha\cdot\bnabla L_{m}(\bx,\by)}{\balpha\cdot\bx}.
\end{equation}
This equation has the boundary condition
\begin{equation}\label{EquationBoundaryConditionFrozenDunklKernelFullRank}
L_m(\bzero,\by)=\delta_{0,m}.
\end{equation}
Let us assume the following solution, which satisfies the boundary condition \eqref{EquationBoundaryConditionFrozenDunklKernelFullRank},
\begin{equation}
L_m(\bx,\by)=\frac{1}{m!}\Big(\frac{x^2y^2}{2\gamma}\Big)^m.
\end{equation}
Inserting this form into \eqref{DifferentialEquationFrozenDunklKernelFullRankExpansion} gives
\begin{equation}
\sum_{\balpha\in R_+}\kappa(\balpha)\frac{\balpha\cdot\bnabla L_{m}(\bx,\by)}{\balpha\cdot\bx}=L_{m-1}(\bx,\by)\frac{y^2}{\gamma}\sum_{\balpha\in R_+}\kappa(\balpha)=y^2L_{m-1}(\bx,\by)
\end{equation}
for all $m>0$. Thus, summing up over $m$ we have the limit
\begin{equation}\label{DunklKernelCompleteLimit}
\lim_{\beta\to\infty}V_\beta\rme^{\sqrt{\beta}\bx\cdot\by}=\sum_{m=0}^\infty L_m(\bx,\by)=\exp\Big(\frac{x^2y^2}{2\gamma}\Big).
\end{equation}
Now, we formulate an approximation for the Dunkl kernel for the case where $\beta$ is very large but finite. From our derivation of \eqref{DunklKernelCompleteLimit}, we know that the first-order correction decays with $\beta$ as $\beta^{-1/2}$. From this consideration, we assume the simplest possible form,
\begin{equation}
V_\beta \rme^{\sqrt{\beta}\bx\cdot\by}\approx D(\bx,\by):=\rme^{x^2 y^2/2\gamma}(1+a\bx\cdot\by),
\end{equation}
where $a=a(\beta)$ is determined using \eqref{dunklkerneleigenvalueequation}. Calculating $T_i D(\bx,\by)$ yields
\begin{equation}
T_i D(\bx,\by)=x_i\frac{y^2}{\gamma}D(\bx,\by)+a y_i \rme^{x^2 y^2/2\gamma}+a\frac{\beta}{2}\rme^{x^2 y^2/2\gamma}\sum_{\balpha\in R_+}\alpha_i\kappa(\balpha)\frac{[1-\sigma_{\balpha}]\bx\cdot\by}{\balpha\cdot\bx}.
\end{equation}
From \eqref{differencetermonlinearfunctions}, we find that
\begin{equation}
\frac{\beta}{2}\sum_{\balpha\in R_+}\alpha_i\kappa(\balpha)\frac{[1-\sigma_{\balpha}]\bx\cdot\by}{\balpha\cdot\bx}=\frac{\beta\gamma}{N}y_i,
\end{equation}
so we have
\begin{equation}
T_i D(\bx,\by)=\Big[x_i\frac{y^2}{\gamma}(1+a\bx\cdot\by)+a y_i+a\frac{\beta\gamma}{N}y_i\Big]\rme^{x^2 y^2/2\gamma}.
\end{equation}
We impose the condition $T_i D(\bx,\by)\to\sqrt{\beta}y_iD(\bx,\by)$ for $\beta$ tending to infinity. This yields
\begin{equation}
\Big[x_i\frac{y^2}{\gamma}(1+a\bx\cdot\by)+a y_i+a\frac{\beta\gamma}{N}y_i\Big]/(1+a\bx\cdot\by)\to\sqrt{\beta}y_i,
\end{equation}
meaning that $a=N/\gamma\sqrt{\beta}$ provided that $\beta\gg N/\gamma$, and
\begin{equation}
V_\beta \rme^{\sqrt{\beta}\bx\cdot\by}\approx D(\bx,\by)=\rme^{x^2 y^2/2\gamma}\Big(1+\frac{N\bx\cdot\by}{\gamma\sqrt{\beta}}\Big).
\end{equation}
Finally, because we have approximated the anisotropic part of $V_\beta \rme^{\sqrt{\beta}\bx\cdot\by}$ to first order, this expression holds for $N^2 x^2 y^2/\beta \gamma^2\ll 1$.
\end{proof}

As a direct consequence of Lemmas~\ref{FreezingLimitDunklKernelFullRank} and \ref{FreezingLimitDunklKernelNonFullRank}, we can write an explicit form for the Dunkl kernel for large but finite $\beta$ in any root system.
\begin{corollary}\label{largebetadunklkernel}
The Dunkl kernel can be approximated by
\begin{equation}
V_\beta \rme^{\sqrt{\beta}\bx\cdot\by}\approx \Big(1+\frac{d_R\bx_\parallel\cdot\by_\parallel}{\gamma\sqrt{\beta}}\Big) \exp\Big[\sqrt{\beta}\bx_\perp\cdot\by_\perp+\frac{x_\parallel^2 y_\parallel^2}{2\gamma}\Big]
\end{equation}
in the case where $\beta\gg d_R/\gamma$ and $d_R^2x_\parallel^2 y_\parallel^2/\beta\gamma^2\ll 1$.
\end{corollary}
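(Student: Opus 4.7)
The plan is to reduce to the two preceding lemmas by decomposing $\bx\cdot\by = \bx_\parallel\cdot\by_\parallel + \bx_\perp\cdot\by_\perp$ and factoring the exponential as
\begin{equation}
\rme^{\sqrt{\beta}\bx\cdot\by} = \rme^{\sqrt{\beta}\bx_\perp\cdot\by_\perp}\cdot\rme^{\sqrt{\beta}\bx_\parallel\cdot\by_\parallel}.
\end{equation}
The structural observation that drives the argument is that every reflection $\sigma_\balpha$ with $\balpha\in R$ fixes $\bx_\perp$ pointwise, and $\balpha\cdot\bx=\balpha\cdot\bx_\parallel$. Hence the difference quotient in $T_i$ only probes the parallel component of its argument.

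First I would establish the factorization
\begin{equation}
V_\beta\bigl[g(\bx_\parallel)\,h(\bx_\perp)\bigr] = h(\bx_\perp)\, V_\beta^{(R)}\bigl[g(\bx_\parallel)\bigr],
\end{equation}
where $V_\beta^{(R)}$ denotes the intertwining operator for $R$ regarded as a full-rank root system inside $\spn(R)\simeq\RR^{d_R}$. The cleanest justification is via the uniqueness of $V_\beta$: the right-hand side is linear in the combined variable, is normalized to $1$ at $\bx=\bzero$, preserves polynomial degree, and satisfies the intertwining relation~\eqref{intertwiningrelation} coordinate by coordinate. For an index $i$ whose basis vector lies in $\spn(R)^\perp$, the operator $T_i$ collapses to $\partial/\partial x_i$ (as noted in the proof of Lemma~\ref{FreezingLimitDunklKernelNonFullRank}), so the relation is immediate on the $h$-factor. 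For $i$ corresponding to a direction in $\spn(R)$, the reflection sum sees only $\bx_\parallel$, and one recovers the intertwining relation for $V_\beta^{(R)}$ acting on $g$.

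Applying this factorization with $h(\bx_\perp) = \rme^{\sqrt{\beta}\bx_\perp\cdot\by_\perp}$ and $g(\bx_\parallel) = \rme^{\sqrt{\beta}\bx_\parallel\cdot\by_\parallel}$ gives
\begin{equation}
V_\beta \rme^{\sqrt{\beta}\bx\cdot\by} = \rme^{\sqrt{\beta}\bx_\perp\cdot\by_\perp}\cdot V_\beta^{(R)}\rme^{\sqrt{\beta}\bx_\parallel\cdot\by_\parallel}.
\end{equation}
Lemma~\ref{FreezingLimitDunklKernelFullRank}, applied with the ambient dimension $N$ replaced by $d_R$ (the true rank of $R$ when viewed inside $\spn(R)$), then yields
\begin{equation}
V_\beta^{(R)}\rme^{\sqrt{\beta}\bx_\parallel\cdot\by_\parallel}\approx\Big(1+\frac{d_R\bx_\parallel\cdot\by_\parallel}{\gamma\sqrt{\beta}}\Big)\exp\Big(\frac{x_\parallel^2 y_\parallel^2}{2\gamma}\Big),
\end{equation}
under the stated hypotheses $\beta\gg d_R/\gamma$ and $d_R^2 x_\parallel^2 y_\parallel^2/\beta\gamma^2\ll 1$. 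Multiplying by the unaltered perpendicular factor and grouping the exponents produces the claimed expression.

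The main obstacle is the factorization step: one must make rigorous the claim that $V_\beta$ respects the orthogonal splitting $\RR^N=\spn(R)\oplus\spn(R)^\perp$ on product functions. Uniqueness handles this cleanly once the defining properties are checked, but it is worth noting that a direct verification is also available from the explicit formula~\eqref{explicitVbeta}: since each $g_i\in W$ acts as the identity on $\bx_\perp$, every factor $(g_i\bx\cdot\bnabla)$ decomposes as $(g_i\bx_\parallel\cdot\bnabla_\parallel)+(\bx_\perp\cdot\bnabla_\perp)$, and on a separated function $g(\bx_\parallel)h(\bx_\perp)$ the cross terms collapse to the desired product structure. As a sanity check, one recovers Lemma~\ref{FreezingLimitDunklKernelNonFullRank} in the strong-coupling limit: sending $\beta\to\infty$ in the unscaled exponential, the parallel factor $V_\beta^{(R)}\rme^{\bx_\parallel\cdot\by_\parallel}$ tends to $1$, leaving $\rme^{\bx_\perp\cdot\by_\perp}$.
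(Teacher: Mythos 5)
Your proof is correct and follows the same overall strategy as the paper: split along $\RR^N=\spn(R)\oplus\spn(R)^\perp$, apply Lemma~\ref{FreezingLimitDunklKernelFullRank} (with $N$ replaced by $d_R$) to the parallel factor and Lemma~\ref{FreezingLimitDunklKernelNonFullRank} to the perpendicular one, and multiply. The only substantive difference is how the separation is justified: the paper separates the Dunkl-Laplacian eigenvalue equation \eqref{dunklkernelequationlaplacian} into its parallel and orthogonal parts, whereas you establish the tensor factorization $V_\beta[g(\bx_\parallel)h(\bx_\perp)]=h(\bx_\perp)\,V_\beta^{(R)}[g(\bx_\parallel)]$ directly from the uniqueness of the intertwining operator. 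Your route is, if anything, more explicit --- the paper's one-line appeal leaves the product ansatz implicit, while the uniqueness argument pins it down cleanly. One caveat: your parenthetical claim that the factorization can also be read off from the explicit formula \eqref{explicitVbeta} is too quick. The coefficients $C(g_1,\ldots,g_n)$ depend on the total degree $n$, and on a separated polynomial $p(\bx_\parallel)q(\bx_\perp)$ the perpendicular factors $(\bx_\perp\cdot\bnabla)$ act by Euler's identity as multiplication by $\deg q$ rather than annihilating anything, so the $2^n$ cross terms do not simply ``collapse''; making that verification rigorous would require nontrivial identities among the $C_n(g)$. Since the uniqueness argument carries the proof on its own, this does not affect correctness.
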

\begin{proof}
When $d_R=N$, the statement is identical to Lemma~\ref{FreezingLimitDunklKernelFullRank}. When $d_R<N$, one can separate \eqref{dunklkernelequationlaplacian} into the part that corresponds to $\spn(R)$ and the part orthogonal to $R$. The first part obeys Lemma~\ref{FreezingLimitDunklKernelFullRank}, and the second part obeys Lemma~\ref{FreezingLimitDunklKernelNonFullRank}. The product of the two functions yields the result. 
\end{proof}

In principle, we should use this corollary to prove Theorem~\ref{TheoremFreezingLimit}, but imposing the condition \eqref{conditioninitialdistribution} allows us to ignore $\bx_\perp$ and $\bY_\perp$. Therefore, we can use Lemma~\ref{FreezingLimitDunklKernelFullRank} (replacing $N$ by $d_R$) to give the proof of Theorem~\ref{TheoremFreezingLimit}.
\begin{proof}[Proof of Theorem~\ref{TheoremFreezingLimit}]
As in the proof of Theorem~\ref{steadystatetheorem}, we consider $\bx_0\in\spn(R)$. 
Let us rewrite the expectation of $\phi(\bY)$ as
\begin{equation}\label{expectationlargebetafullrank}
\langle\phi\rangle_{t,\bx_0}=\int_{\RR^N}\phi(\bY)\frac{\rme^{-\beta Y^2/2}}{z_\beta}w_\beta(\bY)\rme^{-x_0^2/2t}V_\beta\rme^{\sqrt{\beta/t}\bx_0\cdot\bY}\ud\bY.
\end{equation}

Let us evaluate the inner and outer integrals $\mathcal{I}_\rmi$ and $\mathcal{I}_{\text{o}}$. Using Lemma~\ref{FreezingLimitDunklKernelFullRank}, and assuming that $\beta \gg N/\gamma$, the inner integral is rewritten as 
\begin{eqnarray}
\mathcal{I}_\rmi&=&\int_{Y<r\sqrt{\gamma}}\phi(\bY)\frac{\rme^{-\beta Y^2/2}}{z_\beta}w_\beta(\bY)\rme^{-x_0^2/2t}\Big[1+\frac{d_R}{\gamma}\frac{\bx_0\cdot\bY}{\sqrt{\beta t}}\Big]\rme^{x_0^2 Y^2/2\gamma t}\ud \bY\nonumber\\
&\approx&\int_{Y<r\sqrt{\gamma}}\phi(\bY)\frac{\rme^{-\beta \tilde{F}_R(\bY)}}{z_\beta}\Big[1+\frac{d_R}{\gamma}\frac{\bx_0\cdot\bY}{\sqrt{\beta t}}\Big]\ud \bY,
\end{eqnarray}
where
\begin{equation}
\tilde{F}_R(\bY):=\Big(1-\frac{x_0^2}{\gamma \beta t}\Big)\frac{Y^2}{2}-\sum_{\balpha\in R_+}\kappa(\balpha)\log|\balpha\cdot\bY|+\frac{x_0^2}{2\beta t}.
\end{equation}
We ensure that we can use Lemma~\ref{FreezingLimitDunklKernelFullRank} in the region $Y<r\sqrt{\gamma}$ by imposing the condition $\beta t\gg d_R^2x_0^2r^2/ \gamma$, which implies that $d_R^2x_0^2 Y^2/\beta\gamma^2 t\ll 1$. We can use a second-order approximation for $\tilde{F}_R(\bY)$ to obtain a Gaussian approximation similar to the one obtained in the Appendix. In this case, the minima are given by the vectors $\tilde{\bos}$ which satisfy
\begin{equation}
\sqrt{1-x_0^2/\gamma\beta t}\tilde{\bos}=\frac{1}{\sqrt{1-x_0^2/\gamma\beta t}}\sum_{\balpha\in R_+}\frac{\kappa(\balpha)\balpha}{\balpha\cdot\tilde{\bos}}.
\end{equation}
Setting $\bos=\sqrt{1-x_0^2/\gamma\beta t}\tilde{\bos}$ yields the equation which defines the peak set of $R$, meaning that the minima of $\tilde{F}_R(\bY)$ are located at $\tilde{\bos}_i=\bos_i/\sqrt{1-x_0^2/\gamma\beta t}$, where $\{\bos_i\}_{i=1}^{|W|}$ denotes the peak set. The Hessian matrix of $\tilde{F}_R(\bY)$ evaluated at $\tilde{\bos}_l$ is given by
\begin{equation}
[\tilde{\bH}(\tilde{\bos}_l)]_{ij}:=\frac{\partial^2}{\partial Y_j\partial Y_i}\tilde{F}_R(\bY)\Big|_{\bY=\tilde{\bos}_l}=\Big[1-\frac{x_0^2}{\gamma \beta t}\Big]\Big[\delta_{ij}+\sum_{\balpha\in R_+}\frac{\kappa(\balpha)}{(\balpha\cdot\bos_l)^2}\alpha_i\alpha_j\Big]=\Big[1-\frac{x^2}{\gamma \beta t}\Big][{\bH}({\bos}_l)]_{ij}.
\end{equation}
With these relations, we can write
\begin{equation}
\frac{\rme^{-\beta \tilde{F}_R(\bY)}}{z_\beta}\approx\frac{\beta^{N/2}\sqrt{\det \tilde{\bH}(\tilde{\bos}_1)}}{(2\pi)^{N/2}|W|}\sum_{i=1}^{|W|}\exp[-\beta (\bY-\tilde{\bos_i})^T \tilde{\bH}(\tilde{\bos}_i) (\bY-\tilde{\bos}_i)/2],
\end{equation}
and from the expressions obtained for $\{\tilde{\bos}_i\}_{i=1}^{|W|}$ and $\tilde{\bH}(\tilde{\bos}_l)$, we see that the peaks of $\tilde{G}_\beta(\bY)$ converge to $\{{\bos}_i\}_{i=1}^{|W|}$ as
\begin{equation}
\tilde{\bos}_i\approx\Big(1+\frac{x_0^2}{2\gamma\beta t}\Big){\bos}_i,
\end{equation}
while the variances along the eigenvectors of $\tilde{\bH}(\tilde{\bos}_l)$ are given by
\begin{equation}
\frac{1}{\beta \tilde{\lambda}_i}=\Big[(\beta \lambda_i)\Big(1-\frac{x_0^2}{\gamma \beta t}\Big)\Big]^{-1}\approx\Big(1+\frac{x_0^2}{\gamma \beta t}\Big)/\beta\lambda_i.
\end{equation}
By the mean value theorem for integrals, there exists a set of vectors $\{\tilde{\bu}_i\}_{i=1}^{|W|}$ such that
\begin{equation}
\mathcal{I}_\rmi\approx\frac{\beta^{N/2}\sqrt{\det \tilde{\bH}(\tilde{\bos}_1)}}{(2\pi)^{N/2}|W|}\sum_{i=1}^{|W|}\int_{Y<r\sqrt{\gamma}}\phi(\bY)\Big[1+\frac{d_R}{\gamma}\frac{\bx_0\cdot\tilde{\bu}_i}{\sqrt{\beta t}}\Big]\rme^{-\beta (\bY-\tilde{\bos_i})^T \tilde{\bH}(\tilde{\bos}_i) (\bY-\tilde{\bos}_i)/2}\ud \bY.
\end{equation}
Because $\beta$ is very large, we can assume that the value of $\tilde{\bu}_i$ is very close to $\tilde{\bos}_i$, meaning that we can rewrite the inner integral as
\begin{equation}
\mathcal{I}_\rmi\approx\int_{Y<r\sqrt{\gamma}}\phi(\bY)\tilde{G}_\beta(\bY)\ud \bY,
\end{equation}
and the coefficients of the Gaussians are
\begin{equation}
\tilde{c}_i=1+\frac{d_R}{\gamma}\frac{\bx_0\cdot\tilde{\bos}_i}{\sqrt{\beta t}}\approx1+\frac{d_R}{\gamma}\frac{\bx_0\cdot\bos_i}{\sqrt{\beta t}}.
\end{equation}
The outer integral is treated as in \eqref{approximationtailintegral}, provided $\beta t\gg x_0^2/ \gamma r^2$; this condition is justified by the previous assumption that $\beta t\gg d_R^2x_0^2 r^2/\gamma$, for which $r>1$, and by $d_R\geq 1$. This means that in the region $Y\geq r\sqrt{\gamma}$, the location of the peaks and the width of the Gaussians is perturbed by a maximum amount of order $x_0/\sqrt{\beta t}$. The parameter $r$ can be chosen large enough to make the contribution of the integral $\mathcal{I}_\text{o}$ negligible, as the tail of the steady-state distribution decays like a Gaussian distribution. This means that the expectation is approximately given by the integral $\mathcal{I}_\rmi$, and the distribution of the process can be approximated by $\tilde G_\beta (\bY)$. 
\end{proof}


\section{Concluding remarks and discussion}\label{conclusions}

We obtained two results which describe the behavior of scaled Dunkl processes when they approach the steady state and the strong-coupling limits. As a property of the process approaching the steady state (Theorem~\ref{steadystatetheorem}), we proved that the deviation from the steady-state distribution $\exp[-\beta F_R(\bY)]/z_\beta$ is given by a decay law which depends mainly on the action of the intertwining operator on linear functions. This confirms our previous conjecture \cite{andrauskatorimiyashita14} that the convergence to the steady state should be valid for any value of $\beta$, not necessarily large. Moreover, our result implies that Dunkl processes of type $A_{N-1}$ and type $B_N$ need not be radial to converge to the eigenvalue distributions of the $\beta$-Hermite and $\beta$-Laguerre ensembles of random matrices respectively.

As a property of the strong-coupling limit (Theorem~\ref{TheoremFreezingLimit}), we showed that the scaled distribution of the process can be approximated with the sum of multivariate Gaussians given in \eqref{perturbedgaussianapproximationFR}. We obtained the conditions for which this approximation is valid, and our strong-coupling limit asymptotics are consistent with the Gaussian approximations given for the $\beta$-Hermite and $\beta$-Laguerre eigenvalue distributions in Ref.~\onlinecite{dumitriuedelman05}. We also showed that for $t>0$ the scaled probability distribution converges to a sum of delta functions as $\beta\to\infty$. The delta functions are located at the peak set of the root system under consideration. E.g., for the root systems of type $A_{N-1}$ and $B_N$, these peak sets are given by the zeroes of the Hermite and Laguerre polynomials respectively, which is consistent with our previous results. However, peak sets are not expected to be related to the roots of a set of known orthogonal polynomials in general.

We also found the relationship between the corrections to the steady-state distribution and their corresponding mechanisms. In the approach to the steady state, the first-order correction decays as $t^{-1/2}$, and it is due to the exchange mechanism. When the effect of the exchange is removed by choosing a $W$-invariant initial distribution, the dominating correction decays as $t^{-1}$, which is driven by the drift mechanism. While we found a clear dependence on $\beta$ for the exchange correction, we do not know the exact dependence on $\beta$ of the correction due to the drift mechanism. This dependence must be calculated from the effect of $V_\beta$ on quadratic functions.

In the approach to the strong-coupling limit, we used similar arguments to distinguish the corrections due to the exchange and drift mechanisms. We showed that the exchange corrections are of order $(\beta t)^{-1/2}$ and have an effect on the height of the approximating Gaussians. The drift corrections perturb the shape of the Gaussians, i.e., their location and width, and they are of order $(\beta t)^{-1}$. 

From a more mathematical point of view, the large-$\beta$ asymptotics presented here are based on the $\beta$-dependence of each of the terms in the homogeneous polynomial expansion of the Dunkl kernel. This dependence has been shown to be, at most, of the order of $\beta^{-n}$ for the $n$th degree polynomial,\cite{deleavaldemniyoussfi15} and we have found that this decay is weaker, of order $\beta^{-\lfloor (n+1)/2\rfloor}$. We believe that this must be due to the fact that the Dunkl kernel is the simultaneous eigenfunction of not only Dunkl operators, but of the Dunkl Laplacian as well. Because of the symmetry found in root systems, the term of order $\beta^2$ that one would expect to find in the Dunkl Laplacian for being a second order operator vanishes\cite{dunklxu}, and this is the main reason why we found in the proof of Lemma~\ref{FreezingLimitDunklKernelFullRank} that the $2m$th and $(2m-1)$th degree terms in the Dunkl kernel decay in the same form. This means that there must be a way to show that out of the $n$ terms $C_{j-1}(g_j^{-1}g_{j-1})$ in~\eqref{explicitVbeta}, $\lfloor n/2\rfloor$ terms can be shown to not depend on $\beta$. We do not know at the moment how to prove this, but there is some evidence suggesting that this conjecture may be true, such as the form of the rank-one intertwining operator, the form of the Dunkl kernel for dihedral groups given in Ref.~\onlinecite{deleavaldemniyoussfi15}, and the limit form of the (scaled) generalized Bessel function of type $B_N$ at infinite $\beta$.\cite{andrauskatorimiyashita14}

While we are able to calculate the deviations from the steady-state and strong-coupling limits of the scaled distribution of Dunkl processes, there are several quantities that cannot be calculated using the techniques shown here. In particular, the calculation of the steady-state expectation of $\phi(\bY)$ involves the calculation of integrals of the form
\begin{equation}
\int_{\RR^N}\phi(\bY) \rme^{-\beta Y^2/2}\prod_{\balpha\in R_+}|\balpha\cdot\bY|^{\beta\kappa(\balpha)}\ud\bY,
\end{equation}
which are, in general, not trivial. Perhaps this expectation can be calculated using the Dunkl transform, \cite{rosler08}
\begin{equation}
\hat{f}(\bxi):=\frac{1}{c_\beta}\int_{\RR^N}f(\bY)V_\beta\rme^{-\ii\bY\cdot\bxi}\prod_{\balpha\in R_+}|\balpha\cdot\bY|^{\beta\kappa(\balpha)}\ud\bY,
\end{equation}
where $\ii^2=-1$. Indeed, if we set $\varphi(\bY):=\rme^{-\beta Y^2/2}\phi(\bY)$, then $\hat{\varphi}(\bzero)\propto \langle\phi\rangle$. However, this relationship is of little use in practice because the Dunkl kernel is the integral kernel of the transform, meaning that the calculation of the transform depends on the explicit form of the Dunkl kernel. We would like to investigate the problem further, however, because the calculation of both $\langle\phi\rangle$ and $\langle\phi\rangle_t$ should provide the means to study other aspects of Dunkl processes such as multi-time and single-time correlations.

\begin{acknowledgments}
The authors would like to acknowledge the comments and suggestions of the referee, which greatly helped improve this paper.
SA was supported by the Photon Science Center of the University of Tokyo in the duration of this work. 
SA would like to thank E. Paquette and D. Bananni for stimulating discussions.
\end{acknowledgments}
 
\appendix

\section{Peak Sets}

An important part of the proof of Theorem~\ref{TheoremFreezingLimit} concerns the peak sets introduced by Dunkl \cite{dunkl89B} and the minima of the function $F_R(\bY)$. The extrema of $F_R(\bY)$ occur at the solutions of
\begin{equation}\label{ExtremaPotentialR}
\frac{\partial}{\partial Y_i}F_R(\bY)=Y_i-\sum_{\balpha\in R_+}\frac{\kappa(\balpha)}{\balpha\cdot\bY}\alpha_i=0,\ 1\leq i\leq N.
\end{equation}
Denote one solution vector of these equations by $\bos$,
\begin{equation}
\bos=\sum_{\balpha\in R_+}\frac{\kappa(\balpha)}{\balpha\cdot\bos}\balpha.
\end{equation}
It is clear that $\bos\in\spn(R)$. Note that $s^2=\gamma$ because
\begin{equation}\label{normofs}
s^2=\bos\cdot\bos=\sum_{\balpha\in R_+}\frac{\kappa(\balpha)}{\balpha\cdot\bos}\bos\cdot\balpha=\sum_{\balpha\in R_+}\kappa(\balpha)=\gamma.
\end{equation}
The elements of the Hessian matrix $\bH(\bY)$ of $F_R(\bY)$ are given by
\begin{equation}\label{hessianmatrixcomponents}
[\bH(\bY)]_{ij}=\frac{\partial^2}{\partial Y_j\partial Y_i}F_R(\bY)=\delta_{ij}+\sum_{\balpha\in R_+}\frac{\kappa(\balpha)}{(\balpha\cdot\bY)^2}\alpha_i\alpha_j.
\end{equation}
$\bH(\bY)$ is a positive definite matrix for $\bY\cdot\balpha\neq 0$, because for $\bx\in\RR^N$,
\begin{equation}
\sum_{1\leq i,j\leq N}x_ix_j\frac{\partial^2}{\partial Y_j\partial Y_i}F_R(\bY)=x^2+\sum_{\balpha\in R_+}\frac{\kappa(\balpha)}{(\balpha\cdot\bY)^2}(\balpha\cdot\bx)^2\geq 0.
\end{equation}
Therefore, all the extrema of $F_R(\bY)$ are minima, and all eigenvalues of $\bH$ are larger than or equal to 1. Taking $\rho\in W$, one has
\begin{equation}
\rho \bos=\frac{1}{2}\sum_{\balpha\in R}\frac{\kappa(\balpha)}{\balpha\cdot\bos}\rho\balpha=\frac{1}{2}\sum_{\balpha^\prime\in R}\frac{\kappa(\balpha^\prime)}{\rho^{-1}\balpha^\prime\cdot\bos}\balpha^\prime=\frac{1}{2}\sum_{\balpha^\prime\in R}\frac{\kappa(\balpha^\prime)}{\balpha^\prime\cdot\rho\bos}\balpha^\prime.
\end{equation}
Here, the substitution $\balpha^\prime=\rho\balpha$ has been carried out. This means that $\rho\bos$ is also a solution of \eqref{ExtremaPotentialR}, and consequently, its solutions are related with each other by an element of the reflection group $W$. Therefore, there are $|W|$ solutions of \eqref{ExtremaPotentialR}, and they define the peak set of $R$. Because $F_R(\bY)$ is $W$-invariant, all the minima have the same value.  

Using the properties of the peak set, we construct an approximation of $\rme^{-\beta F_R(\bY)}/z_\beta$ when $\beta$ is very large using a second-order Taylor expansion. First, we choose an arbitrary element of the peak set, e.g.~$\bos$, and we approximate $z_\beta$ for large values of $\beta$ as follows.
\begin{equation}
z_\beta=\int_{\RR^N}\rme^{-\beta F_R(\bY)}\ud\bY\approx|W|\rme^{-\beta F_R(\bos)}\int_{\RR^N}\exp[-\beta \bor^T\bH(\bos)\bor/2]\ud\bor,
\end{equation}
where $\bor=\bY-\bos$. Because $\bH$ is positive definite and symmetric, and its eigenvalues are positive, we can use an orthogonal coordinate transformation to solve this Gaussian integral. The result is
\begin{equation}
z_\beta\approx |W|\rme^{-\beta F_R(\bos)} \prod_{i=1}^N\sqrt{\frac{2\pi}{\beta \lambda_i}},
\end{equation}
where the $\{\lambda_i\}_{i=1}^N$ are the eigenvalues of $\bH(\bos)$. Then, the following approximation holds,
\begin{equation}\label{SteadyStateDistributionLargeBetaApproximation}
\frac{\rme^{-\beta F_R(\bY)}}{z_\beta}\approx G_\beta(\bY),
\end{equation}
with $G_\beta(\bY)$ given by \eqref{gaussianapproximationFR}. Note that the approximate distribution is normalized. Finally, as $\beta\to\infty$, each of the Gaussians tends to a delta function in the sense of distributions,
\begin{equation}\label{steadydistributionbetalimit}
\lim_{\beta\to\infty}\frac{\rme^{-\beta F_R(\bY)}}{z_\beta}=\frac{1}{|W|}\sum_{\rho\in W}\delta^{(N)}(\bY-\rho\bos).
\end{equation} 
 
\bibliographystyle{aipnum4-1}
\bibliography{dsf_biblio}
 
\end{document}